\DeclarePairedDelimiter{\floor}{\lfloor}{\rfloor}
\newtheorem{theorem}{Theorem}
\newtheorem{lemma}[theorem]{Lemma}
\newtheorem{definition}{Definition}
\newcommand{\pd}{\mathbf{p}}
\newcommand{\med}{\text{mid}}
\newcommand{\shortOnly}[1]{\ifthenelse{\boolean{short}}{#1}{}}
\newcommand{\onlyShort}[1]{\ifthenelse{\boolean{short}}{#1}{}}
\newcommand{\longOnly}[1]{\ifthenelse{\boolean{short}}{}{#1}}
\newcommand{\onlyLong}[1]{\ifthenelse{\boolean{short}}{}{#1}}
\newcommand{\shortLong}[2]{\ifthenelse{\boolean{short}}{#2}{#1}}
\newcommand{\longShort}[2]{\ifthenelse{\boolean{short}}{#2}{#1}} 
\newcommand{\anis}[1]{{\color{blue}\bf [Anisur: #1]}}
\def\eps{{\epsilon}}
\renewcommand{\Function}[2]{%
	\csname ALG@cmd@\ALG@L @Function\endcsname{#1}{#2}%
	\def\jayden@currentfunction{#1}%
}
\newcommand{\funclabel}[1]{%
	\@bsphack
	\protected@write\@auxout{}{%
		\string\newlabel{#1}{{\jayden@currentfunction}{\thepage}}%
	}%
	\@esphack
}
\newcommand\blfootnote[1]{%
  \begingroup
  \renewcommand\thefootnote{}\footnote{#1}%
  \addtocounter{footnote}{-1}%
  \endgroup
}
\long\gdef\boxit#1{\vspace{5mm}\begingroup\vbox{\hrule\hbox{\vrule\kern3pt
			\vbox{\kern3pt#1\kern3pt}\kern3pt\vrule}\hrule}\endgroup}
\begin{document}

\title{Efficient Distributed Community Detection in the Stochastic Block Model}

\author{
\IEEEauthorblockN{Reza Fathi}
\IEEEauthorblockA{Department of Computer Science\\
  University of Houston \\
   Houston, Texas 77204, USA\\ 
Email: rfathi@uh.edu}

\and
\IEEEauthorblockN{Anisur Rahaman Molla}
\IEEEauthorblockA{Cryptology and Security Research Unit\\ 
Indian Statistical Institute \\
  Kolkata 700108, India \\
  Email: molla@isical.ac.in}

\and
\IEEEauthorblockN{Gopal Pandurangan}
\IEEEauthorblockA{Department of Computer Science\\
  University of Houston \\
   Houston, Texas 77204, USA\\ 
Email: gopalpandurangan@gmail.com}}

%\iffalse
%\author{
%	\IEEEauthorblockN{Reza Fathi\IEEEauthorrefmark{1},  Anisur Rahaman Molla\IEEEauthorrefmark{2}, and Gopal Pandurangan\IEEEauthorrefmark{1}}
%	\IEEEauthorblockA{\IEEEauthorrefmark{1}Department of Computer Science, University of Houston, Texas, USA\\
%	Email: rfathi@uh.edu, gopalpandurangan@gmail.com}
%	\IEEEauthorblockA{\IEEEauthorrefmark{2}Cryptology and Security Unit, Indian Statistical Institute, Kolkata, India\\ 
%Email: molla@isical.ac.in}
%}
%
%
%\section*{TODO List}
%\begin{itemize}
%\item Say the we interchangeably use ``stochastic block model" and ``planted partition model"
%\item In Intro, related work secs (and may be in other places also), discussion on CDST algorithm is there. We should remove them. Check throughout the paper and clean it.   
%\item \anis{IMPORTANT: Specify what is the main difference with the mixing time paper and current paper}. In this paper we are computing the set (in fact, the largest set) where the random walk mixing locally. In \cite{MP18}, the authors consider only the local mixing time which is essentially the existence of a mixing set of certain size, but not the set of nodes where the random walk mixes. The computation of local mixing set is expensive in distributed setting.     
%\item Say that we consider the CONGEST model of distributed computing. (Anisur--I added it briefly in the model section)
%\item The experimental results are missing (Reza?)   
%\item \anis{Define high probability somewhere in the beginning}  
%\end{itemize}
%
%\newpage
%\fi
\maketitle

\vspace{-0.2in}
\begin{abstract}
	Designing effective algorithms for community detection is an important and challenging problem in {\em large-scale} graphs, studied extensively in the literature. Various solutions have been proposed, but many of them
	are centralized with expensive procedures (requiring full knowledge of the input graph) and have a large running time. 
	
		In this paper,  we present a  distributed algorithm for community detection  in the {\em stochastic block model} (also called {\em planted partition model}), a widely-studied and canonical random graph model for community detection and clustering.
	Our  algorithm called {\em CDRW(Community Detection by Random Walks)} 
	 is based on random walks, and is localized and lightweight, and  easy to implement.
	 A novel feature of the algorithm is that it uses the concept of {\em local mixing time} to identify the community around a given node. 
	
	 We present a rigorous theoretical analysis that shows that the algorithm can accurately identify the communities in the stochastic block model and characterize the model parameters where the algorithm works. We also present experimental results that validate our theoretical analysis. 
	 We also analyze the performance of our distributed algorithm under the CONGEST distributed model
	 as well as the $k$-machine model, a model for large-scale distributed computations, and show that it can be efficiently implemented.
	 \end{abstract}

\vspace{-0.4cm}
\blfootnote{A. R. Molla was supported by DST Inspire Faculty Award research grant DST/INSPIRE/04/2015/002801.}
\blfootnote{G. Pandurangan was supported, in part, by NSF grants CCF-1527867, CCF-1540512,  IIS-1633720,  CCF-1717075, and
BSF award 2016419.}
%\blfootnote{A Long version of this paper is available at: https://sites.google.com/site/csrfathi.}

\IEEEpeerreviewmaketitle

\begin{IEEEkeywords}
community detection, random walk, local mixing, conductance,  planted partition model.
\end{IEEEkeywords}

\section{introduction}
\label{sec:intro}
Finding communities in networks (graphs) is an important problem and has been extensively studied in the last two decades, e.g., see the surveys \cite{abbe2018community,clementi-tcs15,fortunato2010community,fortunato2016community} and other references in  Section \ref{sec:related}.
At a high level, the goal is to identify subsets of vertices of the given graph so that each subset represents a ``community." 
While there are differences in how communities are defined exactly (e.g., subsets defining a community may overlap or not), a uniform property that underlies most definitions
is that there are more edges  connecting nodes {\em within} a subset than edges connecting to {\em outside} the subset. Thus, this captures the fact that a community is somehow more ``well-connected" with respect to itself compared to the rest of the graph. One way to express this property formally is by using the notion of {\em conductance}  (defined in Section \ref{sec:local-mixing}) which (informally) measures the ratio of the total degree going out of the subset to the total degree among all nodes in the subset. A community subset will have generally low conductance (also sometime referred to as a ``sparse" cut). Another way to measure this is by using the notion of {\em modularity} \cite{newman2004finding} which (informally) measures how well connected a subset is compared to the random graph that can be embedded within the set. A vertex subset that has a  high modularity value can be considered a community according to this measure.

Designing effective algorithms for community detection in graphs is an important and challenging problem. With the rise
of massive graphs such as the Web graph, social networks, biological networks, finding communities (or clusters) in large-scale graphs efficiently is becoming even more  important \cite{clementi-tcs15,easley2010networks, fortunato2016community,girvan2002community,newman2004finding}. In particular, understanding the community structure  is
a key issue in many applications in various complex networks including biological and social networks (see e.g., \cite{clementi-tcs15} and the references therein).

Various solutions have been proposed (cf. Section  \ref{sec:related}), but many of them
	are centralized  with expensive procedures (requiring full knowledge of the input graph) and have a large running time \cite{clementi-tcs15}. In particular, the problem of detecting (identifying) communities in the {\em stochastic block model (SBM)} \cite{abbe2018community,clementi-tcs15}
	has been extensively studied in the literature (cf. Section \ref{sec:related}). The stochastic block model (defined formally in Section \ref{sec:model}), also
	known as the {\em planted partition model (PPM)} is a widely-used random graph model in community detection and clustering studies (see e.g.,  \cite{abbe2018community,chin2015stochastic,karrer2011stochastic}).  Informally, in the PPM model, we are given a graph $G$ on $n$ nodes which are partitioned into
	a set of  $r$ communities (each is a random graph on $n/r$ vertices) and these communities are interconnected by random edges. The total number of  edges within a community (intra-community edges) is typically much more than the number of edges between communities (inter-community edges). The main goal is to devise algorithms to identify the $r$ communities that are ``planted" in the graph.  Several algorithms have been devised for this model, but as mentioned earlier, they are mostly centralized (with some exceptions --- cf. Section \ref{sec:related}) with large running time. There are few distributed algorithms  (see e.g., \cite{clementi-tcs15,kothapalli2013analysis}) but either they are shown to work only when the number of communities is small (typically 2) \cite{clementi-tcs15}) or when the communities are very dense\cite{kothapalli2013analysis}).   In particular, to the best of our knowledge,  (prior to this work) there is no  rigorous analysis of community detection algorithms in distributed large-scale graph processing models
	such as MapReduce and Massively Parallel Computing (MPC) model~\cite{soda-mapreduce}, 
	and the $k$-machine model \cite{KlauckNPR15}.  %Some distributed algorithms have been proposed for the PPM model, but they mostly

\iffalse
Various solutions  have been proposed, but not a single solution works well on all types of graphs. For example, some solutions work well for  sparse graphs and others work well on denser graphs. Similarly
some work on graphs where the size (as well as the edge density) of the communities are more less the same verses graphs where the communities sizes (and edge densities) vary widely.
\fi
	 \vspace{-0.1in}
	\subsection{Our Contributions}
	\vspace{-0.05in}
	In this paper, we present a  novel distributed algorithm, called {\em CDRW} ({\em Community Detection via Random Walks}) for detecting  communities in the PPM model\cite{condon2001algorithms}.\footnote{Throughout this paper, we use the terms stochastic block model (SBM) and planted partition model (PPM) interchangeably.} 
	Our algorithm 
	 is based on the recently proposed  {\em local mixing}  paradigm \cite{MP18} (see  Section \ref{sec:local-mixing} for  a formal definition) to detect community structure in {\em sparse} (bounded-degree) graphs.  Informally, a local mixing set is one where a random walk started at some node in the set mixes well {\em with respect
	to this set.} The intuition in using this concept for community detection is that since a community  is well-connected, it has good expansion within the community
	and hence a random walk started at a node in the community mixes well within the community. The notion of ``mixes well" is captured by the fact
	that the random walk reaches close to the stationary distribution  when {\em restricted to the nodes in the community subset} \cite{MP18}.  
	Since the main tool for this algorithm uses random walks which is local and lightweight, it is easy to implement this algorithm in a distributed manner.  We will  analyze the performance of the algorithm in two distributed computing models,
	namely the standard CONGEST model of distributed computing \cite{peleg} and the $k$-machine model \cite{KlauckNPR15}, which is a model
	for large-scale  distributed computations. We show that CDRW can be implemented efficiently in both models  (cf. Theorem \ref{thm:r-community}
	and Section \ref{subsec:kmachine}). The $k$-machine model implementation is especially suitable for large-scale graphs and thus can be used
	in community detection in large SBM graphs. In particular, we show that the round complexity in the $k$-machine model
	(cf. Section \ref{subsec:kmachine}) scales  quadractically (i.e., $k^{-2}$) in the number of machines when the graph is sparse and linearly (i.e.,$k^{-1}$)  in general.
	
	As is usual in community detection, a main focus is analyzing the effectiveness of the algorithm in finding communities.
	 We present a rigorous theoretical analysis that shows that CDRW algorithm can accurately identify the communities in the PPM, which is a popular and widely-studied random graph model for community detection analysis \cite{abbe2018community}. A PPM model (cf. Section \ref{sec:model}) is a parameterized random graph model which has a built-in community structure. Each community has high expansion within the community and forms a low conductance subset
	 (and hence relatively less edges go outside the community); the  expansion,  conductance, and edge density can be controlled by varying the parameters. CDRW does well when the number of intra-community edges is much more than
	 the number of inter-community edges (these are controlled by the parameters of the model).
	Our theoretical analysis  (cf. Theorem \ref{thm:r-community}  for the precise statement) quantitatively characterizes when CDRW does well vis-a-vis the parameters of the model. Our results improve over previous distributed algorithms that have been proposed
	for the PPM model  (\cite{clementi-tcs15}) both in the number of communities that can be provably detected as well
	as range of parameters where accurate detection is possible; they also improve
	on previous results that  provably work only on dense PPM graphs \cite{kothapalli2013analysis} (details in Section \ref{sec:related}).
	
	We also present extensive simulations of our algorithm in the PPM model under various parameters. Experimental results on the model validate our theoretical analysis; in fact experiments show that CDRW works relatively well in identifying communities even under less stringent parameters. 
	 %However, CDRW doesn't work well for real-world graphs which don't have uniform degree distribution and often have a varying and not so clear-cut community structure with  varying sizes and edge densities.

	 %Our work is a step towards designing  community detection algorithms that are effective vis-a-vis the underlying graph structure.

%\section{introduction}\label{sec:intro}
%\section{Preliminaries}
\vspace{-0.1in}
\subsection{Model, Definitions, and Preliminaries}\label{sec:model}
\vspace{-0.05in}
\subsubsection{Graph Model}
\label{sec:ppm}
We describe the  {\em stochastic block model (SBM)} \cite{holland1983stochastic}, a well-studied random graph model that is used in community detection analysis.\onlyLong{ Before that we recall the {\em Erd\H{o}s-R{\'e}nyi} random graph model \cite{erds1960evolution}, a basic random graph model, that the SBM model builds on.} In the Erd\H{o}s-R{\'e}nyi random graph model, also known as the $G_{np}$ model, each  of $\binom{n}{2}$ possible edges is present in the graph independently with probability $p$. The $G_{np}$ random graph has close to uniform degree (the expected degree of a node is $(n-1)p$) and a well-known fact about $G(n,p)$ is that
if $p = \Theta(\log n/n)$ the graph is connected with high probability\footnote{Throughout this paper, by ``with high probability" we mean, ``with probability at least $1 -1/n$."} and its degree is concentrated at its expectation. 
%Nodes communicate through the edges in synchronized rounds; in every round, every node is allowed to send a message of size at most $O(\log(n))$ bits to each of its neighbors. The message will arrive to the receiver nodes at the end of the current round. Initially every node has limited knowledge; it only knows its ID and its neighbors IDs.  
%Let us describe two well known random graph models that have ground truth community structure. The graphs will be used as benchmark to analyze and evaluate the performance of one of our algorithm. 
%We consider the random regular expander graph $G_{np}$, where The $G_{np}$ graph is well studied in the context of random graph model and also known as $Erd\H{o}s-Renyi\ random\ graph$ \cite{erds1960evolution}. The $G_{np}$ graph has a uniform degree distribution and being considered as a single community. 
%Another class of random graphs with ground truth communities is stochastic block model (SBM) \cite{holland1983stochastic}. 
\onlyLong{
In the SBM model, the vertices are partitioned into $r$ disjoint community sets ($r$ is a parameter). Let $c(u)$ denote the  community that node $u$ belongs to.
If two nodes $u$ and $v$ belong to the same community (i.e., $c(u) = c(v)$), they connect independently with probability $p_{c(u)}$ (independent of other nodes). If they belong to different communities, they connect independently  with probability $p_{c(u),c(v)}$. A SBM model has a separable community structure \cite{abbe2018community} if it has a higher value of intra- than inter-community connectivity probability.
A commonly used version of SBM model that is called }\onlyShort{\\} {\em Planted Partition Model (PPM)}, denoted as $G_{npq}$ \cite{condon1999algorithms,holland1983stochastic} is usually used as a benchmark. A symmetric $G_{npq}$ with $r$ blocks is composed of $r$ exclusive set of nodes in $\mathcal{C}=<C_1,C_2,\dots,C_r>$, where $\cup_{i=1}^{i=r} C_i=V,\ |C_i|=|C_j|  \text{ and }C_i \cap C_j=\emptyset \text{ when } i\neq j$. In $G_{npq}$, each possible edge $e(u,v)$ is independently present with probability $p$ if both ends of $u$ and $v$ belong to the same block $C_i$, otherwise with probability $q$. \onlyLong{Figure \ref{fig:graphcom} shows a PPM of $5$ blocks ($r=5$), each containing $200$ nodes.}
For the PPM model, each of the $r$ communities induce a $n/r$-vertex subgraph which is a $G(n/r,p)$ random graph. The goal of community detection in the PPM (or SBM) model
is to identify the $r$ community sets. This problem has been widely studied \cite{abbe2018community}.

\iffalse
\subsubsection{Notion of a Community}
\label{sec:community}
While there is not a universally accepted definition for communities \cite{fortunato2016community}, in this paper, we use a reasonable definition
that is either the same or closely related to what is used in the literature (cf. see Section \ref{sec:related}).  Informally, we define a community as a subset of nodes ($C_i \subset V$) where nodes inside the subset are ``more connected" compared to nodes outside of the community. In other words, for each $v\in C_i$ it is $\sum_{u \in C_i} I_{e(v,u)}>\sum_{u \in V/C_i} I_{e(v, u)}$ where $I$ is an indicator function and returns $1$ if an edge $e(v, u)$ exists, otherwise $0$. In this paper, we look for non-overlapping communities of an undirected graph $G(V,E)$. Let $\mathcal{C}=\{C_1, C_2, ..., C_r\}$, we call it non-overlapping if $\cup C_i = V$ and $C_i \cap C_j=\emptyset$ when $i\neq j$. For the SBM, the communities are the $r$ subsets of the vertices that
induce a random graph. In particular, for the PPM model, each of the $r$ communities induce a $n/r$-vertex subgraph which is a $G(n/r,p)$ random graph. The goal of community detection in the PPM (or SBM) model
is to identify the $r$ community sets. This problem has been widely studied \cite{abbe2018community}.
\fi

\onlyLong{
\begin{figure}
	\begin{subfigure}[b]{0.5\columnwidth}
		\includegraphics[width=\textwidth]{./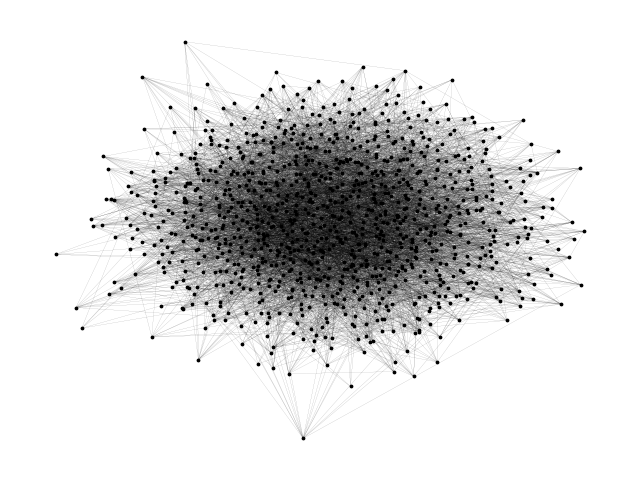}
		\caption{A PPM graph shown without its ground-truth communities.}
		\label{fig:org}
	\end{subfigure}%
	~
	\begin{subfigure}[b]{0.5\columnwidth}
		\includegraphics[width=\textwidth]{./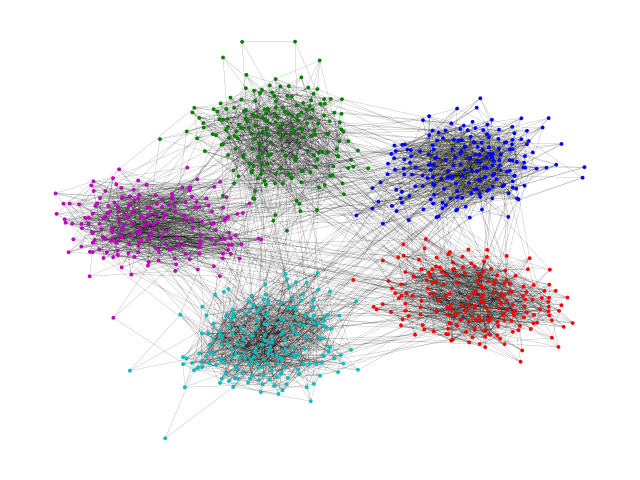}
		\caption{Redrawing of the same PPM graph showing the  communities.}
		\label{fig:blocks}
	\end{subfigure}
	\caption{\footnotesize Both of the above graphs are the same PPM graph. The graph size is $n=1000$, number of communities $r=5$, the existence probability of intra(inter) community edges is $p=\frac{1}{20}$($q=\frac{1}{1000})$ as in \cite[Figure~1]{abbe2018community}. We highlight ground truth communities in different colors in Figure\ref{fig:blocks}.
	}
	\label{fig:graphcom}
\end{figure}
}

\subsubsection{Distributed Computing Models}
\label{sec:dist-model}
We consider two  distributed computing models and analyze the complexity of the algorithm's implementation under both the models.

\noindent {\bf CONGEST model.} This is a standard and widely-studied {\em CONGEST} model of distributed computing \cite{gopal-book,peleg}, which captures the bandwidth constraints inherent in real-world networks.  
The distributed network is modelled as an undirected and unweighted graph $G = (V, E)$, where $|V| = n$ and $|E| = m$,
where nodes represent processors (or computing entities) and the edges represent communication links. In this paper, $G$ will
be a graph belonging to the PPM model. Each node is associated with a distinct label or ID (e.g., its IP address).  Nodes communicate through the edges in synchronous rounds; in every round, each node is allowed to send a message of size at most $O(\log n)$ bits to each of its neighbors. The message will arrive to the receiver nodes at the end of the current round. Initially every node has limited knowledge; it only knows its own ID and its neighbors IDs. In addition, it may known additional parameters of the graph such as $n,m,D$ (where $D$ is the diameter). 
%In addition we assume that nodes know that the network is an SBM graph  (but do not know the parameters $p$ and $q$).  
The two standard complexity of an algorithm are
the time and message complexity  in the CONGEST model. While time complexity measures
the number of rounds taken by the algorithm, the message complexity measures the total number of messages exchanged during the course of the algorithm.

\noindent {\bf $k$-machine model.} The $k$-machine model (a.k.a. Big Data model) is a  model for {\em large-scale} distributed computing introduced in \cite{KlauckNPR15} and studied in various papers \cite{BandyapadhyayIPP18,pemmaraju-opodis18,KlauckNPR15,spaa16,PanduranganRS18}.
 In this model, the input graph (or more generally, any other type of data)  is distributed across a group of
$k \geq 2$ machines that are pairwise interconnected via a communication network. 
The $k$ machines jointly perform computations on an arbitrary $n$-vertex, $m$-edge input graph (where typically $n,m \gg k$) distributed among the machines (randomly or in a balanced fashion). 
%There are two common ways of distributing the input graph
%that have been considered: {\em vertex partition} and {\em edge partition} (more on this below).
%which is assumed to be initially randomly partitioned among the $k$ machines, a common
%implementation in many real world graph processing systems~\cite{pregel,stanton,Stanton14,facebook}.
 The communication
is point-to-point via message passing. Machines do not share any memory and have no other
means of communication. 
  The computation advances in synchronous rounds, and
each link is assumed to have
a bandwidth of $B$ bits per round, i.e., $B$ bits can be
transmitted over each link in each round;  unless otherwise stated, we assume $B = O(\log n)$ (where $n$ is the input size) \cite{spaa16,PanduranganRS18}.
The goal is to minimize the {\em round  complexity},
i.e., the number of \emph{communication rounds} required by the computation.\footnote{The
communication cost is assumed to be the dominant cost -- which is typically the case in Big Data computations --- and hence the goal of minimizing the number of communication rounds \cite{cacm}.}

Initially, the entire graph $G$  is not known by any single machine, but rather partitioned among the $k$ machines in a
``balanced'' fashion, i.e., the nodes and/or edges of $G$ are partitioned approximately evenly among the machines.
We assume a {\em  vertex-partition} model, whereby vertices, along with information of their incident edges,
are partitioned across machines. Specifically, the type of partition that we will assume throughout is the
{\em random vertex partition (RVP)}, that is, each vertex of the input graph is assigned randomly to one
machine. (This is the typical way used by many real systems, such as Pregel~\cite{pregel}, to initially distribute the input graph
among the machines.)
%More formally, in the RVP model,  each vertex of $G$ is assigned independently and uniformly at
%random to one of the $k$ machines. 
If a vertex $v$ is assigned to machine $M_i$ we say that $M_i$ is the {\em home machine}
of $v$. % and, with a slight abuse of notation, write $v \in M_i$. 
%When a vertex is assigned to a machine, all its
%incident edges are assigned to that machine as well; i.e., the home machine knows the {IDs
%of the neighbors of that vertex as well as the identity of the home machines of the neighboring vertices (and
%the weights of the corresponding edges in case $G$ is weighted).
A convenient way to implement  the RVP model is through hashing: each vertex (ID)
is hashed to one of the $k$ machines. Hence, if a machine knows a vertex ID, it also knows where it is hashed to. 
It can be shown that the RVP model results in (essentially) a {\em balanced} partition of the graph: each machine gets $\tilde{O}(n/k)$ vertices and $\tilde{O}(m/k + \Delta)$ edges, where $\Delta$ is the maximum degree. 

\iffalse
Note that we can also assume an
alternate partitioning model, the \emph{random edge partition (REP)} model, where each edge of $G$ is
assigned independently and randomly to one of the $k$ machines. The results in the random
vertex partition model can be related to the random edge partition model \cite{spaa16}.
\fi

At the end of the computation, %for output, 
%each machine $M_i$ must set a designated local output variable $o_i$ (which need
%not depend on the set of vertices assigned to $M_i$), and the \emph{output configuration} $o=\langle o_1,\dots,o_k\rangle$
%must satisfy the feasibility conditions of the problem at hand. 
 for the community detection  problem,
the community that each vertex belongs to will be output by some machine.

\vspace{-0.1in}
\subsection{Random Walk Preliminaries and Local Mixing Set} 
\label{sec:local-mixing}
\vspace{-0.05in}
Our  algorithm is based on the mixing property of a random walk in a graph.  We use the notion of local mixing set of a random walk, introduced in \cite{MP18}, to identify communities in a graph. Let us define random walk preliminaries, local mixing time, and local mixing set as defined in \cite{MP18}. 
Given an undirected graph and a source node $s$, a {\em simple random walk} is defined as: in each step, the walk goes from the current node to a random neighbor i.e., from the current node $u$, the probability of moving to node $v$ is $\Pr(u, v) = 1/d(u)$ if $(u,v) \in E$, otherwise $\Pr(u, v) = 0$, where $d(u)$ is the degree of $u$. Let $\pd_t (s)$ be the probability distribution at time $t$ starting from the source node $s$. Then $\pd_0 (s)$ is the initial distribution with probability $1$ at the node $s$ and zero at all other nodes. The $\pd_t (s)$ can be seen as the matrix-vector multiplication between $(A)^t$ and $\pd_0(s)$, where $A$ is the transpose of the transition matrix of $G$. %Let the probability a node $v$ after $t$ steps is $p_t(s, v)$. 
Let $p_t(s, v)$ be the probability that the random walk be in node $v$ after $t$ steps.
When it's clear from the context we omit the source node from the notations and denote it by $p_t(v)$ only. The stationary distribution (a.k.a steady-state distribution) is the probability distribution which doesn't change anymore (i.e., it has converged). The stationary distribution of an undirected connected graph is a well-defined quantity which is $\bigl(\frac{d(v_1)}{2m}, \frac{d(v_2)}{2m}, \ldots, \frac{d(v_n)}{2m}\bigr)$, where $d(v_i)$ is the degree of node $v_i$.  %It is known that for every undirected connected graph $G$, the distribution $\pmb{\pi}(v) = d(v)/2m$ is stationary. 
We denote the stationary distribution vector by $\pmb{\pi}$, i.e., $\pi(v) = d(v)/2m$ for each node $v$. The stationary distribution of a graph is fixed irrespective of the starting node of a random walk, however, the number of steps (i.e., time) to reach to the stationary distribution could be different for different starting nodes. The time to reach to the stationary distribution is called the {\em mixing time} of a random walk with respect to the source node $s$. The mixing time corresponding to the source node $s$ is denoted by $\tau^{mix}_s$. The mixing time of the graph, denoted by $\tau^{mix}$, is the maximum mixing time among all (starting) nodes in the graph. %The formal definitions are given below. 

%The {\em mixing time} of a random walk starting from source node $s$, denoted by $\tau_s$, is the  time (or number of steps) taken to reach to the stationary distribution of the graph.  The mixing time, denoted by $\tau$, is the maximum mixing time among all (starting) nodes in the graph.

\begin{definition}\label{def:mixing-time} ($\tau^{mix}_s(\eps)$--mixing time for source $s$ and $\tau^{mix}(\eps)$--mixing time of the graph)\\
Define $\tau^{mix}_s (\eps)= \min \{t : ||\pd_t - \pmb{\pi}||_1 < \eps\}$, where $||\cdot||_1$ is the $L_1$ norm. Then $\tau^{mix}_s(\eps)$ is called the $\eps$-near mixing time for any $\eps$ in $(0, 1)$. The mixing time of the graph is denoted by $\tau^{mix} (\eps)$ and is defined by $\tau^{mix}(\eps) = \max \{\tau^{mix}_v(\eps): v \in V\}$. It is clear that $\tau^{mix}_s (\eps) \leq \tau^{mix} (\eps)$. \hfill $\blacksquare$
\end{definition}

%While a standard definition of mixing time usually takes $\eps$ to be $1/2e$ \cite{Levin}, we assume $\eps = o(1)$ throughout the paper.  
We sometime omit $\eps$ from the notations when it is understood from the context.  For any set $S\subseteq V$, we define $\mu(S)$ is the {\em volume} of $S$ i.e., $\mu(S) = \sum_{v \in S} d(v)$. Therefore, $\mu(V) = 2m$ is the volume of the vertex set. The {\em conductance} of the set $S$ is denoted by $\phi(S)$ and defined by 
$\phi(S) =  \frac{|E(S, V\setminus S)|}{\min\{\mu(S),\, \mu(V\setminus S)\}},$
where $E(S, V\setminus S)$ is the set of edges between $S$ and $V\setminus S$. The conductance of the graph $G$ is $\Phi_G = \min_{S\subseteq V} \phi(S)$. 

Let us define a vector $\pmb{\pi}_{S}$ over the set of vertices $S$ as follows: 
$\pi_{S}(v) = d(v)/\mu(S)$ if $v \in S$, and $\pi_{S}(v)=  0$ otherwise.

%\[ \pi_{S}(v) =
%  \begin{cases}
%    d(v)/\mu(S)       & \quad \text{if } v \in S\\
%    0  & \quad \text{otherwise} \\
%  \end{cases}
%\]

Notice that $\pmb{\pi}_V$ is the stationary distribution $\pmb{\pi}$ of a random walk over the graph $G$, and $\pmb{\pi}_S$ is the restriction of the distribution $\pmb{\pi}$ on the subgraph induced by the set $S$. Recall that we defined $\pd_t$ as the probability distribution over $V$ of a random walk of length $t$, starting from some source vertex $s$. Let us denote the restriction of the distribution $\pd_t$ over a subset $S$ by $\pd_t {\restriction_S}$ and define it as: 
$p_t {\restriction_S} (v) = p_t(v)$  if $v \in S$ and $p_t {\restriction_S} (v) =0$ otherwise.

%\[ p_t {\restriction_S} (v) =
%  \begin{cases}
%    p_t(v)       & \quad \text{if } v \in S\\
%    0  & \quad \text{otherwise} \\
 % \end{cases}
%\] 
It is clear that $p_t {\restriction_S}$ is not a probability distribution over the set $S$ as the sum could be less than $1$. 
%We define the notion of  {\em local mixing time} with respect to a source vertex as follows. 

Informally, {\em local mixing set}, with respect to a source node $s$, means that there exists  some (large-enough) subset of nodes $S$ containing $s$ such that the random walk probability distribution becomes close to the stationary distribution restricted to $S$ (as defined above) quickly. The time that a random walk mixes locally on a set $S$ is called as {\em local mixing time} which is formally defined below.    

\begin{definition}\label{def:loc-mix-time} (Local Mixing Set and Local Mixing Time) \\
Consider a vertex $s \in V$. Let $\beta \geq 1$ be a positive constant and $\epsilon \in (0,1)$ be a fixed parameter. We  first
 define the notion of local mixing in a set $S$.  Let $S  \subseteq V$ be a {\em fixed} subset containing $s$ of size at least $n/\beta$. Let $\pd_t{\restriction_S}$ be the restricted probability distribution over $S$ after $t$ steps of a random walk starting from $s$ and  $\pmb{\pi}_S$ be as defined above.  Define the {\em  mixing time  with respect to set $S$}  as  $\tau^S_{s}(\beta, \eps) = \min \{t: ||\pd_t{\restriction_S} - \pmb{\pi}_S ||_1 < \eps \}$. We say that the random walk {\em locally mixes} in $S$ if  $\tau^S_{s}(\beta, \eps)$ exists and well-defined. (Note that a walk may not locally mix in a given set $S$, i.e., there exists no time $t$ such that  $||\pd_t{\restriction_S} - \pmb{\pi}_S ||_1 < \eps$; in this case we can take $\tau^S_{s}(\beta, \eps)$ to be $\infty$.)

The local mixing time with respect to $s$ is defined as
 $\tau_{s}(\beta, \eps) = \min_{S}\tau^S_{s}(\beta, \eps)$, where the minimum is taken over all subsets $S$ (containing $s$) of size
at least $n/\beta$, where the random walk starting from $s$ locally mixes. A set $S$  where the minimum  is attained (there may be more than one) is
called the {\em local mixing set}. 
%We call $\tau_{s}(\beta, \eps)$ as {\em local mixing time} corresponding to the source vertex $s$ and the parameter $\beta$. 
The local mixing time of the graph, $\tau(\beta, \eps)$ (for  given $\beta$ and $\epsilon$), is $\max_{v \in V} \tau_{v}(\beta, \eps)$. \hfill $\blacksquare$
\end{definition}

From the above definition, it is clear that $\tau_{s}(\beta, \eps)$ always exists (and well-defined) for every fixed $\beta \geq 1 $, since in the worst-case,
it equals the mixing time of the graph; this happens when $|S| = n \geq n/\beta$ (for every $\beta \geq 1$). We note that, crucially, in the above definition of local mixing time,  the {\em minimum} is taken over subsets $S$ of size at least $n/\beta$, and thus, in many graphs, 
local mixing time can be substantially smaller than the mixing time when $\beta > 1$ (i.e., the local mixing can happen much earlier in
some set $S$ of size $\geq n/\beta$ than
the mixing time). 

%It is important to note that the set $S$ where the  local mixing time is attained is
%not fixed  a priori, it only requires that a set $S$ of size at least $n/\beta$ exists. Further, it is also clear that for a specific length of a random walk, the local mixing set may or may not exists. The {\em largest local mixing set} $S_t$ is the maximum size local mixing set for a given length $t$ of the random walk (if exists). In this paper we consider $\eps$ to be $1/2e$ which is standard assumption in the definition of mixing time \cite{Levin}. We use the term ``a random walk mixes on a set $S$'' to mean that the random walk probability distribution (at time $t$) is close to restricted stationary distribution on $S$, i.e., $||\pd_t{\restriction_S} - \pmb{\pi}_S ||_1 < 1/2e$.  

%\onlyShort{The full version of the paper can be found in: add site}

%
%\subsection{Our Contributions}\label{sec:contribution}

\vspace{-0.1in}
\section{related works}
\vspace{-0.05in}
\label{sec:related}
There has been extensive work on  community detection in graphs, see, e.g., 
the surveys of \cite{abbe2018community,abbe2015community,fortunato2010community, fortunato2016community}. 
Here we  focus mainly on related works in distributed community detection and in the SBM, especially the $G_{npq}$ model.

 Dyer and Frieze \cite{dyer1989solution} show that if $p > q$ then the minimum edge-bisection is the one that separates the two classes and present an algorithm  that gives the bisection in $O(n^3)$ expected time. Jerrum and Sorkin improved  this bound for some range of $p$ and $q$ by using simulated annealing. Further improvements and more efficient algorithms were obtained in \cite{condon2001algorithms, mossel2012stochastic}. We note that all the above algorithms are  centralized and based on expensive procedures such as simulated annealing and spectral graph computations: all of them require the full knowledge of the graph.
 
 The work of Clementi et. al \cite{clementi-tcs15} is notable because they present a distributed protocol based on the popular Label Propagation approach and prove that, when the ratio $p/q$ is larger than $n^b$ (for an arbitrarily small constant $b >0$), the protocol finds the right planted partition in $O(\log n)$ time. Note that however, they consider only two communities in their PPM model. We also note that this ratio can be significantly weaker compared to the ratio  (for identifying all the $r$ communities) derived in our Theorem \ref{thm:r-community}
 which is $p/q = O(r\log(n/r)$ where $r$ is the number of communities (which can be much smaller compared to $n$, the
 total number of vertices). Also our algorithm works for any number of communities.
 
Random walks have been successfully used for graph processing in many ways. Community detection algorithms use the statistics of random walks to infer structure of graphs as clusters or communities. 
{\em Netwalk} \cite{zhou2004network} defined a proximity measure between neighboring vertices of a graph. Initialized each node as a community, it merges two communities with the lowest proximity index into a community in iterations. But it is an expensive method running in $O(n^3)$ time complexity. 
Similarly, {\em Walktrap} by Pons et al \cite{pons2006computing} defined a distance measure between nodes using random walks. Then they merged similar nodes into a community in iterations. The logic behind their method is that random walks get trapped inside densely connected parts of a graph which can capture communities. Their algorithm is centralized and has expensive run-time of $O(mn^2)$ in worst case. 

Some works uses linear dynamics of graphs to perform basic network processing tasks such as reaching self-stabilizing consensus in faulty distributed systems \cite{benezit2009interval, olshevsky2011convergence} or Spectral Partitioning \cite{donath1972algorithms, lei2015consistency,  peng2015partitioning}. They work on connected non-bipartite graphs. Becchetti et al \cite{becchetti2017find} defines averaging dynamics, in which each node updates its value to the average of its neighbors, in iterations. It partitions a graph into two clusters using the sign of last updates.  Another interesting research by Becchetti et. al. \cite{DBLP:conf/esa/BecchettiCMNPRT18} used random walk to average values of two nodes when randomly any two nodes meet and showed that it ends in detecting communities.  
The convergence time of the averaging dynamics on a graph is the mixing time of a random walk \cite{shah2009gossip}. These methods works well on graphs with good expansion \cite{hoory2006expander} and are slower on sparse cut graphs.
{\em Label Propagation Algorithm (LPA)} \cite{raghavan2007near} is another updating method which converges to detecting communities by applying majority rule. Each node initially belongs to its own community. At each iteration, each node joins a community having majority among its neighbors, applying a tie-breaking policy. Recently Kothapalli et al provided a theoretical analysis for its behavior \cite{kothapalli2013analysis} on {\em dense} PPM graphs ($p = \Omega(1/n^{1/4})$ and $q = O(p^2)$). In comparison, our algorithm works even for the more challenging case of sparse graphs ($p = \Omega(\log n/n)$, i.e., the near the connectivity threshold).   A major drawback of LPA algorithm is the lack of convergence guarantee. For example, it can run forever on a bipartite graph where each part gets a different label (each community is specified by a label).

\iffalse
Some other random walk methods use seed node expansion to detect community structures. Starting from a small set of ``seed" nodes (e.g., a single node or a connected cluster of nodes), they grow a community around the seed set by using a goodness function. They may use a greedy method \cite{mislove2010you} or an optimization function \cite{bagrow2008evaluating} to improve. Optimization techniques are the natural way to search for clusters and communities of graphs. In this method, a quality function  (that captures the ``goodness" of clusters) is defined and then its extreme for an optimal output is searched. For example, Newman et al uses modularity as their optimization function \cite{newman2004finding}. Their aim is to maximize the modularity function that  results  in detecting stronger community structures.
The method in \cite{clauset2005finding} grows the size of community by adding nodes which improve the modularity most; it stops when reaching a favored community size. Recently Hollocou et al \cite{hollocou2018multiple} applied  clustering methods to find seed nodes and then find communities from those nodes. Some other methods \cite{andersen2006communities, kloumann2014community, whang2013overlapping} use a small seed set instead of one node for their detection. Unlike those methods, our CDRW algorithm uses local mixing property of a graph \cite{sarma2012fast, molla2017distributed,MP18} to detect community structure around a seed node.
\fi

 Although this paper uses the notion of local mixing time introduced in \cite{MP18}, there are substantial differences. In \cite{MP18}, the authors consider only the local mixing time which is essentially the existence of a mixing set of certain size, but {\em not the set of nodes} where the random walk mixes. The computation of the local mixing set is more challenging. A key idea of this paper is to use this notion to identify communities. For this, the algorithm and approach of \cite{MP18} has to be modified substantially.

\section{Algorithm for Community Detection}\label{sec:algo}
%\vspace{-0.05in}
We design a random walk based community detection algorithm (cf. Algorithm~\ref{alg:via-local-mixing}). Given a graph and a node, the algorithm finds a community containing the node in a distributed fashion. We show the efficiency and effectiveness of the algorithm  both theoretically and experimentally on random graphs and stochastic block model. 

%We present a distributed algorithm based on random walks. We theoretically prove the effectiveness of this algorithm on Erd\H{o}s-R{\'e}ny random graphs and planted partition models. Indeed we prove that our algorithm detects a random graph as a single community. We then prove that it detects $k$ ground truth communities of $k$ partition $Planted\ Partition\ Model$ graphs. We show it beats or work the same of the comparing baseline methods on different synthesis graphs. In this work we focused on the theoretical proof and experimental results on synthesis graphs.

%\subsection{Community Detection via Local Mixing Set}\label{sec:local-mixing-algo}
\noindent \textbf{Outline of the Algorithm.}  We use the concept of {\em local mixing set}, introduced by Molla and Pandurangan \cite{MP18}, to identify community in a graph. A local mixing set of a random walk is a subset of the vertex set where the random walk probability mixes fast, see formal definition in the Section~\ref{sec:model}. Intuitively, a random walk probability mixes fast over a subset where nodes are well-connected among themselves. The idea is to use the concept of local mixing set to identify a community --- a subset where nodes are well-connected  inside the  set and less-connected outside. That is, if a random walk starts from a node inside a community, its probability distribution is likely to mix fast inside the community nodes and  less amount of probability will go outside of the set. Thus the high level idea of our approach is to perform a random walk from a given source node and find the largest subset (of nodes) where the random walk probability mixes quickly. We extend the distributed algorithm from \cite{MP18} to find a largest mixing set in the following way. In each step of the random walk, we keep track the size of the largest mixing set. When the size of the largest mixing set is not increasing significantly with the increase of the length of the random walk, we stop and output the largest mixing set as the community containing the source node.

\iffalse
Given a graph, our algorithm randomly picks a source node and computes the community containing the source node using the above approach. To find other communities, we can recurse the algorithm from different source nodes belonging to different communities. After finding a community, the algorithm picks another source node from the remaining nodes (i.e., outside of the computed community nodes) and recurses with the new source node to find a different community. 
It is not difficult to adapt this algorithm to find communities in parallel by starting random walks from several nodes (e.g.,
these can be chosen randomly). \anis{need to modify this part depending on this parallel community computation} 
%We don't further address this aspect in this paper,  as our main focus is on
%analyzing the effectiveness of the algorithm in finding communities.

% The detail algorithm is discussed below. We prove the effectiveness and efficiency of our algorithm by both theoretical analysis and comprehensive experimental evaluation.     
\fi

\noindent \textbf{Algorithm in Detail.} %\label{sec:local-mixing-algo}
Given an undirected graph $G(V,E)$, our algorithm %detects communities one by one. The algorithm 
randomly selects a node $s$ and outputs a community set $C^s \subseteq V$ containing $s$. It maintains a set, called as $pool$ which contains all the remaining nodes of $V$ excluding the nodes in $C^s$. Then another random node gets  selected from the set $pool$ and we  compute the community containing that node in $G$, and so on. This way all the different communities are computed one by one. The $pool$ set is initialized by $V$ in the beginning. The algorithm  stops when the $pool$ set becomes empty. 

Now we describe how the algorithm computes the community set $C^s$ in $G$ from a given node $s$. The algorithm performs a random walk from the source node $s$ and computes the probability distribution $\pd_{\ell}$ at each step $\ell$ of the random walk. The probability distribution $\pd_{\ell}$ starting from the source node $s$ is computed locally by each node as follows: Initially, at round $0$, the probability distribution is: at node $s$, $p_0(s, s) = 1$ and at all other nodes $u$, $p_0(s, u) = 0$.  At the start of a round $\ell$, each node $u$ sends $p_{\ell-1}(s, u)/d(u)$ to its $d(u)$-neighbors and at the end of the round $\ell$, each node $u$ computes $p_{\ell}(s, u) = \sum_{v \in N(u)} p_{\ell-1}(s, v)/d(v)$. This local flooding approach essentially simulates the probability distribution of each step of the random walk starting from a source node.  Moreover, this deterministic flooding approach can be used to compute the probability distribution $\pd_{\ell}$ of length $\ell$ from the previous distribution $\pd_{\ell-1}$ in {\em one round} only-- simply by resuming the flooding from the last step. The full algorithm can be found in Algorithm~1 in \cite{MP18}. Then at each step $\ell$, our algorithm computes a largest mixing set $S_{\ell}$. The largest  mixing set $S_{\ell}$ is computed as follows: Each node $u$ knows its $p(u) = p_{\ell}(s, u)$ value. The algorithm gradually increases the size of a candidate local mixing set $S$ starting from size $1$.\footnote{In the pseudocode we assume the size of each community is at least $\log n$.} First each node $u$ locally calculates its $x_u$ value as $x_u=|p_{\ell}(u)-\frac{d(u)}{\mu'(S)}|$, where $\mu'(S) = \frac{2m}{n}|S|$ is the average volume of the set $S$. Note that any node $u$ can compute $\mu'(S)$ when it knows the ``size''-- $|S|$ and hence can compute $x_u$ locally. However, it's difficult to compute $\mu(S)$ unless it knows the set $S$ (i.e., the nodes in $S$) and the degree distribution of the nodes in $S$. Computing nodes in $S$ and their degree distribution is expensive in terms of time. That's why we  consider $\mu'(S)$ instead $\mu(S)$ in the localized algorithm. 

Then the source node $s$ collects $|S|$ smallest of $x_u$ values and checks if their sum is less than $1/2e$ (mixing condition). For this each node may sends its $x_u$ to the source nodes $s$ via upcasting through a BFS tree rooted at $s$. (A BFS tree is computed from $s$ at the beginning of the algorithm). However, the upcast may take $\Omega(n)$ time in the worst case due to the congestion in the BFS tree. A better approach is used in \cite{MP18}, which is to do a binary search on $\{x_u \,|\, u\in V\}$, as follows: All the nodes send $x_{\min}$ and $x_{\max}$ (the minimum and maximum respectively among all $x_u$) to the root $s$ through a convergecast process (e.g., see \cite{peleg}). This will take time proportional to the depth of the BFS tree. Then $s$ can count the number of nodes whose $x_u$ value is less than $x_{\med} = (x_{\min} + x_{\max})/2$ via a couple of broadcast and convergecast. In fact, $s$ broadcasts the value $x_{\med}$ to all the nodes via the BFS tree and then the nodes whose $x_u$ value is less than $x_{\med}$ (say, the {\em qualified nodes}), reply back with $1$ value through the convergecast. Depending on whether the number of qualified nodes is less than or greater than $|S|$, the root updates the $x_{\med}$ value (by again collecting $x_{\min}$ or $x_{\max}$ in the reduced set) and iterates the process until the count is exactly $|S|$. Note that there might be multiple nodes with the same $x_u$ value. We can make them distinct by adding a `very' small random number to each of the $x_u$ such that the addition doesn't affect the mixing condition. The detailed approach and analysis can be found in \cite{MP18}.   

Once the node $s$ gets $|S|$ smallest $x_u$s, it checks if their sum is less than $1/2e$. If true, then these nodes $u$ whose sum value is less than $1/2e$ gives a candidate mixing set and its size is $|S|$. Then we increase the set size and check if there is a larger mixing set. If the mixing condition doesn't satisfy, then there is no mixing set of size $|S|$. The algorithm iterates the checking process few more times by increasing the size of $S$ and checking if there is a mixing set of larger size. If not, then the algorithm stops for this length $\ell$ and stores the largest mixing set at $s$. This way, the algorithm finds the largest mixing set $S_{\ell}$ at the $\ell^{th}$ step of the random walk. Note that we can increase the candidate mixing set size by $1$ each time. This will increase the time complexity of the algorithm by a factor of the ``size of the largest mixing set''. Instead we increase the size of the mixing set by a factor of $(1+1/8e)$ in each iteration. This will only add a factor of $O(\log n)$ to the time complexity. The reason why we increase by a factor of $(1+1/8e)$ instead of doubling is discussed in \cite{MP18} (see, Lemma~3 in \cite{MP18}). The correctness of the all the above tests is also analyzed in \cite{MP18}. 

Then the algorithm checks if the size of the largest mixing set $S_{\ell}$ at step $\ell$ increases significantly than mixing set $S_{\ell-1}$ in the previous step $(\ell -1)$. This is checked locally by the source node as the source node has the information of the largest mixing set of the current and previous steps. If the size doesn't increase by a factor $(1+\delta)$, i.e., if $S_{\ell} < (1+\delta)S_{\ell-1}$, then the algorithm stops and output $S_{\ell-1}$ as the community set $C^s$. Otherwise, the algorithm increases the length by $1$ and checks for $S_{\ell +1}$. The parameter $\delta$ is chosen to be the conductance of the graph  $\Phi_G$ which essentially measures the vertex expansion of the graph.   

%%%BEGIN ALGO BOX
\setlength{\textfloatsep}{4pt}
\begin{algorithm*}[h]
\footnotesize
	\caption{\sc Community-Detection-by-Random-Walks (CDRW)}
	\label{alg:via-local-mixing}
\textbf{Input:} An undirected graph $G = (V, E)$. \\
\textbf{Output:} Set of Detected Communities $\mathcal{C^D}$.
	\begin{algorithmic}[1]
		\State $\mathcal{C^D} \gets \{\}$
		\State $pool \gets V$
		\While{$pool\neq \emptyset$} \Comment{There exist nodes not assigned to any communities yet}
		\State $s \gets$ pick a random node from $pool$
		\State $s$ computes a BFS tree  of depth $O(\log n)$ via flooding 
		\State Set $R = \log n$. \Comment{Size of the local mixing set initialize by $\log n$. Assume community size is $\geq \log n$}
		\State Set $p_0(s) = 1$ and $p_0(u) = 0$ for all other nodes $u$. 
%		\State Set $R = \log n$, $p_0(s) = 1$, and $p_0(u) = 0$ for all other nodes $u$.
		\For{$\ell=1, 2, 3, \dots, O(\log n)$} \Comment{Length of the random walk}
		\State Each node $u$ whose $p_{\ell-1}(u) \neq 0$, does the following in parallel:\\
 			\hspace{2cm} (i) Send $p_{\ell-1}(u)/d(u)$ to all the neighbors $v \in N(u)$. \\
 			\hspace{2cm} (ii) Compute the sum of the received values from its neighbors and sets it as $p_{\ell}(u)$.  
		\For{$|S| = R, (1 +1/8e)R, (1 +1/8e)^2 R, \ldots, n$}
		\State Each node $u$ computes the difference $x_u= |p_{\ell}(u)-\frac{d(u)}{\frac{2m}{n}|S|}|$ locally  
		\State $s$ computes the sum of $|S|$ smallest $x_u$ values using binary search method discussed in the detail description of the algorithm.
		\State  $s$ checks if the sum is less than $1/2e$, i.e., if $\sum_{\{|S|\, smallest \, x_u\}} x_u<\frac{1}{2e}$.
		\State If ``true", then $s$ checks for the next size of the mixing set  
		\State Else, $s$ sets $S_{\ell}$ to be the largest set $S$ which satisfies the mixing condition. $s$ broadcasts an indicator message to all the nodes via BFS tree. The nodes whose $x_u$ value gives the $|S|$ smallest values belong to the largest mixing set $S_{\ell}$.  
		\EndFor
%		\State $s$ checks the community condition as follows:
%		\State \textbf{if} $\frac{|S_{\ell}|}{|S_{\ell-1}|} <(1+\delta)$ \textbf{Then} Break the {\bf for}-loop. \Comment{$\delta =  \Phi_G$}
		
%		\State $s$ checks the community condition: \textbf{if} $\frac{|S_{\ell}|}{|S_{\ell-1}|} <(1+\delta)$ \textbf{Then} Break the {\bf for}-loop. \Comment{$\delta =  \Phi_G$}
		
		\If{$\frac{|S_{\ell}|}{|S_{\ell-1}|} <(1+\delta)$} \Comment{$\delta =  \Phi_G$}
		\State Break the {\bf for}-loop. 
		\EndIf

		\EndFor
		
		\State $C^s \gets S_{\ell-1}$ 
		\State $\mathcal{C^D} \gets \mathcal{C^D} \cup \{C^s\}$
		\State $pool \gets pool \setminus C^s$
		
%		\State $C^s \gets S_{\ell-1}$; $\mathcal{C^D} \gets \mathcal{C^D} \cup \{C^s\}$; $pool \gets pool \setminus C^s$; %\Comment{Include $C^s$ \hspace*{5mm}in $\mathcal{C}$ and remove the community nodes from the $pool$}
		\EndWhile
		\State Return $\mathcal{C^D}$ %\Comment{set of communities}
	\end{algorithmic}
\end{algorithm*}

%%END ALGOBOX

\subsection{Analysis}\label{sec:local-mixing-analysis}
\vspace{-0.05in}
%\noindent \textbf{Analysis.} 
We analyze the algorithm and show that it correctly identifies communities in the planted partition model (PPM) -- $G_{npq}$ graphs. The $G_{npq}$ graph is formed by connecting several communities of $G_{np}$ graphs (see the definition in Section~\ref{sec:model}). 
Let us first analyze the Algorithm~\ref{alg:via-local-mixing} on the random  graph $G_{np}$. We then extend the analysis to the stochastic block model $G_{npq}$.
%We show that the algorithm stops in at most $O(\log{n})$ times and returns largest locally mixing set which resembles local community. In our analysis, distances are hop distance between nodes. We analytically prove that it works on $random$ and $planted\ random$ graphs. 

\noindent \textbf{On $G_{np}$ Graphs.} 
Suppose the algorithm is executed on the standard random (almost regular) graph $G_{np} = (V_1, E_1)$, defined in Section~\ref{sec:model}. Since $G_{np}$ is an expander graph, the random walk starting from any node mixes over the vertex set $V_1$ very fast; in fact in $O(\log n)$ steps. Given any node $s$, we show that our algorithm computes the community $C^s$ as the complete vertex set $V_1$. More precisely, we show that the size of the largest mixing set increases on a higher rate (than the considered threshold) after each step of the random walk, when the length of the walk is $o(\log n)$.  Since $O(\log n)$ is the mixing time of $G_{np}$, the random walk probability reaches  the stationary distribution after $c\log n$ steps, for a sufficiently large constant $c$. 

Let $p_t(u)$ be the probability that the walk is at $u$ after $t$ steps (starting from a source node $s$). It is known that in a regular graph $p_t(u)$ is bounded by: 
\begin{equation}\label{eq1}
\frac{1}{n} - \lambda_2^t  \leq p_{t}(u) \leq \frac{1}{n} + \lambda_2^t 
\end{equation}
where $\lambda_2$ is the second largest eigenvalue (absolute value) of the transition matrix of $G_{np}$\footnote{The bound follows from the standard bound $|p_{t}(s, u) - \pi(v)| \leq \lambda^t_2\sqrt{\pi(v)/\pi(s)}$ in general graphs \cite{LL93}. In a regular graph, $\pi(v) = 1/n$ for all $v$. Note that $G_{np}$ is not exactly a regular graph, but very close to regular (especially if $p= (c\log n)/n$ for a large enough constant $c$). It can be shown that $\pi(v) = 1/n \pm o(1/n)$ in $G_{np}$. For simplicity we assume that $G_{np}$ is a regular graph as this little $\pm \eps$  changes in the degree or in the probability distribution doesn't affect the lemmas.}. Hence, the above bound on the probability distribution $p_{t}$ holds in $G_{np}$ graphs. It is further  known that in a random $d$-regular graph, the second largest eigenvalue is bounded by \cite{Friedman-book}: 
\begin{equation}\label{eq2}
\frac{1}{\sqrt{d}} \leq \lambda_2 \leq \frac{1}{\sqrt{d}}+o(1) 
\end{equation}
Let $B_{\ell}$ be the set of nodes that are within the distance $\ell$  from the $s$. The  distance is measured by the hop distance between nodes. Let's call $B_{\ell}$ a ball of radius $\ell$ centered at $s$. We now show that after $\ell$ steps of the random walk, the largest mixing set is $B_{\ell/2}$ in a $G_{np}$ graph.
\begin{lemma}\label{lem:mixing-half-ball}
	Let a random walk start from a source node $s$ in a $G_{np}$ graph. Then for any length $\ell$ which is less than the mixing time of the random walk, the largest mixing set is the ball $B_{\floor{\ell/2}}$ with high probability. 
\end{lemma}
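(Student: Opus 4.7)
The strategy is to verify, at time $\ell < \tau^{mix}$, two complementary facts: (i) $\|\pd_\ell \restriction_{B_{\lfloor\ell/2\rfloor}} - \pmb{\pi}_{B_{\lfloor\ell/2\rfloor}}\|_1 < \epsilon$, which certifies $B_{\lfloor\ell/2\rfloor}$ as a valid mixing set, and (ii) no strict superset $S \supsetneq B_{\lfloor\ell/2\rfloor}$ satisfies the analogous inequality, making this ball the largest mixing set. Because $G_{np}$ with $p = \Omega(\log n/n)$ is w.h.p.\ nearly regular with degree $d = (1 \pm o(1))np$ and locally tree-like around $s$ up to depth comparable to the mixing time, the ball sizes behave like $|B_k| = (1 \pm o(1))(np)^k$ throughout the relevant range, and the target distribution is essentially uniform on the ball: $\pi_{B_{\lfloor\ell/2\rfloor}}(u) = (1 \pm o(1))/|B_{\lfloor\ell/2\rfloor}|$ for every $u \in B_{\lfloor\ell/2\rfloor}$, on the scale $(np)^{-\ell/2}$.

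For (i), the uniform upper bound on $p_\ell(u)$ comes directly from equations~(\ref{eq1}) and~(\ref{eq2}): $p_\ell(u) \le 1/n + ((np)^{-1/2} + o(1))^\ell$, dominated (since $\ell$ lies below the mixing time $\Theta(\log n/\log(np))$) by the second term, which is of the same order $(np)^{-\ell/2}$ as the target. The matching lower bound $p_\ell(u) \ge (1 - o(1))(np)^{-\ell/2}$ for $u \in B_{\lfloor\ell/2\rfloor}$ would be established by a walk-counting argument on the locally tree-like neighborhood of $s$: a node $u$ at hop-distance $k \le \lfloor\ell/2\rfloor$ is reached by $\Theta((np)^{\ell/2})$ length-$\ell$ walks from $s$ (with enough ``backtracking'' freedom, each step offering $\approx np$ choices), each contributing weight $(np)^{-\ell}$, so $p_\ell(u)$ is of the expected order. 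Summing $|p_\ell(u) - \pi_{B_{\lfloor\ell/2\rfloor}}(u)|$ over $u \in B_{\lfloor\ell/2\rfloor}$ then falls below $\epsilon$ after absorbing the $o(1)$ fluctuations.

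For (ii), consider any $S \supsetneq B_{\lfloor\ell/2\rfloor}$ and a node $v \in S \setminus B_{\lfloor\ell/2\rfloor}$. Either $v \notin B_\ell$, forcing $p_\ell(v) = 0$, or $v$ lies in some shell $k > \lfloor\ell/2\rfloor$ of $B_\ell$ where the symmetric walk-count gives $p_\ell(v) \le (np)^{-\ell/2 - \Omega(1)}$, a super-constant factor below the target $d(v)/\mu(S) \le 1/|B_{\lfloor\ell/2\rfloor}|$. Including such $v$ in $S$ therefore contributes a non-negligible term to the $L_1$-deficit against $\pmb{\pi}_S$, violating the mixing condition and ruling out any strict superset. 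The main obstacle I anticipate is in the lower-bound walk count for (i): one must translate counts from an idealized $(np)$-regular tree to the actual random graph $G_{np}$, which requires careful control of both the $o(1)$ degree fluctuations and the rare short cycles that perturb the local tree picture, while preserving the uniform $p_\ell \approx (np)^{-\ell/2}$ profile on $B_{\lfloor\ell/2\rfloor}$ whp over the draw of the graph.
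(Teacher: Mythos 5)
Your skeleton matches the paper's: both arguments certify $B_{\lfloor\ell/2\rfloor}$ as a mixing set using the eigenvalue bounds (1)--(2) together with $|B_k|=O((np)^k)$, and then rule out larger sets by arguing that a per-node deviation of order $\lambda_2^{\ell}$ summed over a bigger set already exceeds the threshold $1/2e$. The substantive difference is in part (i). The paper bounds $\sum_{u\in B_{\lfloor\ell/2\rfloor}}\bigl|p_\ell(u)-1/|B_{\lfloor\ell/2\rfloor}|\bigr|$ using only the upper bound $p_\ell(u)\le 1/n+\lambda_2^\ell$ and the near-cancellation $\lambda_2^\ell\approx(np)^{-\ell/2}\approx 1/|B_{\lfloor\ell/2\rfloor}|$; it never establishes a pointwise lower bound on $p_\ell(u)$ inside the ball. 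You correctly observe that an $L_1$ estimate needs two-sided control (a node with $p_\ell(u)\approx 0$ contributes its full share $1/|B_{\lfloor\ell/2\rfloor}|$ to the sum), and you try to supply the missing lower bound $p_\ell(u)\ge(1-o(1))(np)^{-\ell/2}$ by counting walks in the locally tree-like neighborhood of $s$. In that respect you are attempting to prove strictly more than the paper does.

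That walk-counting step is where your proposal breaks, and not merely for the technical reasons you anticipate. On a locally tree-like $(np)$-regular neighborhood, the number of length-$\ell$ walks from $s$ to a node $u$ at hop-distance $k$ is of order $\binom{\ell}{(\ell-k)/2}(np-1)^{(\ell-k)/2}$, so $p_\ell(u)\approx\binom{\ell}{(\ell-k)/2}(np)^{-(\ell+k)/2}$, not $\Theta((np)^{-\ell/2})$: the distance from $s$ evolves as a walk with outward drift $1-2/(np)$ per step, so after $\ell$ steps essentially all of the probability mass lies in the shells at depth $(1-o(1))\ell$, and a node at depth $k\le\lfloor\ell/2\rfloor$ receives only $(np)^{-(\ell+k)/2+o(\ell)}\ll(np)^{-\ell/2}$. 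Hence the uniform profile on $B_{\lfloor\ell/2\rfloor}$ on which both halves of your argument rest (the lower bound in (i), and the claim in (ii) that deeper shells carry less than their proportional share) is not what the local computation delivers; if anything, the outer shells are the ones that are proportionally filled. The obstacle you flag in your last sentence is therefore not a matter of transferring counts from an idealized tree to $G_{np}$; it is the crux of the lemma, and your proposed route does not close it. (For what it is worth, the paper's own proof sidesteps this only by replacing $|p_\ell(u)-1/|B||$ with $|1/n+\lambda_2^\ell-1/|B||$, which is not a valid pointwise bound when $p_\ell(u)$ is far below its maximum, so your instinct about where the difficulty sits is sound.)
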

\begin{proof}
Assume $\ell = o(\log n)$, since $\ell$ is less than the mixing time $O(\log n)$.  It is known that the size of the ball $B_{\ell}$ in a random graph $G_{np}$ is bounded by  $ O((np)^{\ell})$ with high probability (cf. Lemma~2 in \cite{CL01}). %Therefore, $|B_{\ell}| \leq O((\log n)^{\ell})$.
To prove the lemma we show that the random walk probability mixes inside the ball $B_{\floor{\ell/2}}$ and doesn't mix on the ball of radius larger than $\floor{\ell/2}$. Recall that the condition of locally mixing on a subset $B_{\floor{\ell/2}}$ is $\sum_{u\in B_{\floor{\ell/2}}} {|p_{\ell}(u)-\frac{1}{|B_{\floor{\ell/2}}|}|}\leq \frac{1}{2e}$, (since $G_{np}$ is regular graph). Using the above bound of $p_t(u)$, $\lambda_2$ (Equ~\ref{eq1}, \ref{eq2}) and $|B_{\floor{\ell/2}}| \leq d^{{\floor{\ell/2}}}$ (since $d = np = \Theta(\log n)$ in expectation in $G_{np}$) we have:
\begin{align*} \label{eq2}
& \sum_{u\in B_{\floor{\ell/2}}} {\Big|p_{\ell}(u)-\frac{1}{|B_{\floor{\ell/2}}|}\Big|}  
 \leq \sum_{u\in B_{\floor{\ell/2}}} {\Big|\frac{1}{n}+ \lambda_2^{\ell}-\frac{1}{|B_{\floor{\ell/2}}|}\Big|} \\
%& = \sum_{u\in B_{\floor{\ell/2}}} {\Big|\frac{1}{n}+\frac{1}{d^{\frac{\ell}{2}}} + o(1) -\frac{1}{|B_{\floor{\ell/2}}|}\Big|} \\ 
& \leq  \, d^{{\floor{\ell/2}}} \Big|\frac{1}{n}+\frac{1}{d^{\frac{\ell}{2}}} + o(1) -\frac{1}{d^{{\floor{\ell/2}}}}\Big|  < \, \frac{d^{{\floor{\ell/2}}}}{n} + o(1)\\
  &  < \frac{1}{2e} \hspace{1cm} [\text{since $\frac{d^{{\floor{\ell/2}}}}{n} = o(1)$ as $\ell = o(\log n)$}]
\end{align*} 
This shows that the random walk of length $\ell$ mixes over the nodes in $B_{\floor{\ell/2}}$. Now we show that it doesn't mix on $B_{t}$ for $t >\ell/2$. Again from Equation~\ref{eq1} and~\ref{eq2}, 
\begin{align*}
& \sum_{u\in B_{t}} {\Big|p_{\ell}(u)-\frac{1}{|B_{t}|}\Big|} 
 \geq \sum_{u\in B_{t}} {\Big|\frac{1}{n} - \lambda_2^{\ell}-\frac{1}{|B_{t}|}\Big|} \\
 & = \sum_{u\in B_{t}} {\Big|\lambda_2^{\ell} + \frac{1}{|B_{t}|} - \frac{1}{n} \Big|} 
 \geq \sum_{u\in B_{t}} {\Big|\lambda_2^{\ell} \Big|} \hspace{.5cm} [\text{since $|B_t| \leq n$}] \\
& \geq \frac{|B_{t}|}{d^{\frac{\ell}{2}}} \geq  d > 1/2e    \hspace{.5cm} [\text{since $t >\ell/2$ and $d=\log n$}] 
\end{align*} 
Thus the largest mixing set is $B_{\floor{\ell/2}}$. 
\end{proof} 

Now we show that our algorithm outputs the full vertex set as the community in $G_{np}$ graphs.  

\begin{lemma}\label{lem:gnp-community}
	Given a random regular expander graph $G_{np} = (V_1, E_1)$, the Algorithm~\ref{alg:via-local-mixing} outputs the vertex set $V_1$ as a single community with high probability.  
\end{lemma}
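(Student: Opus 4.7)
The plan is to leverage Lemma~\ref{lem:mixing-half-ball} together with the expansion properties of $G_{np}$ to show that the outer while-loop makes exactly one iteration, producing $V_1$ itself as the sole detected community.

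First, I would record the relevant graph parameters: for $p = \Omega(\log n / n)$, $G_{np}$ is a near-regular expander with expected degree $d = np = \Theta(\log n)$ and constant conductance $\Phi_G = \Omega(1)$. Thus both the threshold $\delta = \Phi_G$ and the neighborhood growth factor $d$ are controlled quantities. Invoking Lemma~\ref{lem:mixing-half-ball}, the largest mixing set $S_\ell$ at step $\ell$ (below the $O(\log n)$ mixing time) has size $\Theta(|B_{\lfloor \ell/2 \rfloor}|) = \Theta((np)^{\ell/2})$, using the standard concentration bound on $r$-neighborhood sizes in $G_{np}$ (e.g., Lemma~2 of \cite{CL01}).

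Second, I would compare the per-step ratio $|S_\ell|/|S_{\ell-1}|$ against $1+\delta$. Each increment of the ball radius multiplies $|B_r|$ by $np = \Theta(\log n)$, which dominates $1+\Phi_G$ by a large margin; even taking the geometric mean across two consecutive steps (to absorb parity), the per-step growth factor $\sqrt{np} = \Theta(\sqrt{\log n})$ still far exceeds $1+\Phi_G$. Hence the break condition in line~18 of Algorithm~\ref{alg:via-local-mixing} is not triggered prematurely. I would then locate the saturation step $\ell^* = O(\log n / \log\log n)$ at which $|S_{\ell^*}| = n$, obtained by solving $(np)^{\ell^*/2} \geq n$; since $\ell^* < O(\log n)$ it lies well inside the outer \textbf{for}-loop range. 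At step $\ell^* + 1$ the mixing set cannot grow beyond $V_1$, so $|S_{\ell^*+1}|/|S_{\ell^*}| = 1 < 1+\delta$, the break fires, and the algorithm sets $C^s = S_{\ell^*} = V_1$. Since $V_1$ empties $pool$, the while-loop terminates after this single iteration and the algorithm returns $V_1$ as the unique community.

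The principal obstacle I expect is the parity subtlety in the ball-growth statement of Lemma~\ref{lem:mixing-half-ball}: for odd $\ell$, $\lfloor (\ell-1)/2 \rfloor = \lfloor \ell/2 \rfloor$, which on its face would make $|S_\ell|/|S_{\ell-1}| = 1$ at alternate steps and spuriously trigger the break. Resolving this rigorously requires either (a) showing that the true largest mixing set in $G_{np}$ is not exactly the ball $B_{\lfloor \ell/2 \rfloor}$ but a smoothly-interpolating superset whose size grows by a factor at least $\sqrt{np}$ at every step (justified by the gradual spread of the walk distribution in an expander), or (b) reinterpreting the growth test on pairs of consecutive steps so that the two-step compound ratio $\Theta(np)$ dominates $(1+\Phi_G)^2$. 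Either route yields the claimed output and completes the proof.
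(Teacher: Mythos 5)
Your proof follows essentially the same route as the paper's: invoke Lemma~\ref{lem:mixing-half-ball} to identify $S_\ell$ with the ball $B_{\lfloor \ell/2\rfloor}$, observe that the growth factor $\Theta(np)=\Theta(\log n)$ per radius increment dominates $1+\Phi_G$ so the break condition is not triggered before mixing, and conclude that once the mixing set saturates at $V_1$ it stops growing, the break fires, and the algorithm outputs $V_1$ as the single community. You are in fact more careful than the paper, which simply writes the growth ratio as $|B_{\lfloor\ell/2\rfloor}|/|B_{\lfloor\ell/2\rfloor-1}|=O(d)$ per step and never confronts the parity issue you raise (that $\lfloor\ell/2\rfloor=\lfloor(\ell-1)/2\rfloor$ for odd $\ell$, which taken literally makes $|S_\ell|/|S_{\ell-1}|=1$ on alternate steps and would spuriously trigger the break); your proposed fixes address a genuine gap in the published argument rather than introducing a new one.
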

	\begin{proof}
		It follows from the previous lemma that when $\ell$ is less than the mixing time of $G_{np}$, then the largest local mixing set is  $B_{\floor{\ell/2}}$. Therefore, in each step of the random walk, the size of the mixing set is increased by a factor $\frac{|B_{\floor{\ell/2}}|}{|B_{\floor{\ell/2-1}}|} = O(d) = \Theta(\log n) >(1+\delta)$. Hence, by the condition of the Algorithm~\ref{alg:via-local-mixing}, it doesn't stop and continue to look for a community set for the larger lengths of the random walk. It means, until the length of the random walk reaches to the mixing time of the graph $G_{np}$, the algorithm continue its execution. When the length reaches the mixing time, then the random walk will mix the full vertex set $V_1$. Then the algorithm stops and outputs $V_1$ as a single community set (as the size of the mixing set won't increase anymore for larger lengths).      
	\end{proof}
	
\noindent \textbf{On $G_{npq}$ Graphs.}  
Let us now analyze the algorithm on the planted partition model i.e., on a random $G_{npq}$ graph. A random $G_{npq}$ graph is formed by $r$ equal size blocks $C_1,C_2,\dots, C_r$ where each component $C_i$ is a  $G_{\frac{n}{r}p}$ random graph (see the definition in Section~\ref{sec:model}). We show that the algorithm correctly identifies each block as a community. Suppose the randomly selected node $s$ belongs to some block $C$. The induced subgraph on $C$ is a $G_{\frac{n}{r}p}$ graph i.e., the nodes inside $C$ are connected to each other with probability $p$. Further each node in $C$ is connected to every node outside of $C$ with probability $q$. Thus the random walk may go out of the set $C$ at some point. We show that the probability of going out of $C$ is very small when the length of the walk is smaller than the mixing time of $G_{\frac{n}{r}p}$ graph, which is $O(\log (n/r))$. 

\begin{lemma}\label{lem:rw-stays-inside}
Given a $G_{npq}$ graph and a node $s$ in some block $C$, the probability that a random walk starting from $s$ stays inside $C$ is at least $1-o(1)$ until $\ell = O(\log (n/r))$ when $q = o(\frac{p}{r\log (n/r)})$. 
\end{lemma}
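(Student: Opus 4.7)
The plan is to bound, at each step, the probability that the walk takes a single ``cross-community'' edge out of $C$, and then apply a union bound over the $\ell = O(\log(n/r))$ steps.

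First, I would control the degrees. Fix any $v \in C$ and split its degree as $d(v) = d_{\text{in}}(v) + d_{\text{out}}(v)$, where $d_{\text{in}}(v)$ counts neighbors inside $C$ and $d_{\text{out}}(v)$ counts neighbors in $V\setminus C$. Since edges are independent, $d_{\text{in}}(v)$ is a sum of $(n/r-1)$ independent Bernoulli$(p)$ random variables and $d_{\text{out}}(v)$ is a sum of $(r-1)(n/r)$ independent Bernoulli$(q)$ random variables. A standard Chernoff bound (together with the assumption $p = \Omega(\log n / n)$ needed for connectivity and the fact that $C$ has $n/r$ nodes, so that the expected intra-community degree $(n/r-1)p = \Omega(\log(n/r))$) gives, with probability at least $1 - 1/n^2$,
\[
d_{\text{in}}(v) = (1 \pm o(1))\tfrac{n}{r}p, \qquad d_{\text{out}}(v) \leq 2(r-1)\tfrac{n}{r}q,
\]
simultaneously for all $v \in C$ by a union bound over at most $n$ vertices.

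Next, I would translate this into a per-step ``escape'' probability. Conditioned on the walk being at some $v \in C$ at time $t$, the probability that the $(t+1)$-st step leaves $C$ is exactly
\[
\frac{d_{\text{out}}(v)}{d(v)} \;\leq\; \frac{2(r-1)(n/r)q}{(1-o(1))(n/r)p} \;=\; O\!\left(\frac{rq}{p}\right).
\]
Let $A_t$ denote the event that the walk is still in $C$ after step $t$. Then
\[
\Pr[\overline{A_{t+1}} \mid A_t] \;\leq\; \max_{v \in C}\frac{d_{\text{out}}(v)}{d(v)} \;=\; O\!\left(\frac{rq}{p}\right),
\]
uniformly in $t$ and independently of the past history (conditional on the degree bounds, which hold w.h.p.\ over the draw of $G_{npq}$).

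Finally, by a union bound over the $\ell = O(\log(n/r))$ steps,
\[
\Pr[\overline{A_\ell}] \;\leq\; \ell \cdot O\!\left(\frac{rq}{p}\right) \;=\; O\!\left(\frac{rq\log(n/r)}{p}\right) \;=\; o(1),
\]
where the last equality uses the hypothesis $q = o\!\bigl(p/(r\log(n/r))\bigr)$. Combining this with the $1 - 1/n^2$ probability that the degree concentration holds gives $\Pr[A_\ell] \geq 1 - o(1)$, which is the claim. The main technical point to be careful about is that the per-step bound must be uniform over all $v \in C$ and independent of the walk's history; this is handled cleanly by first conditioning on the (high-probability) degree concentration event, which depends only on the graph and not on the walk, and then controlling the walk one step at a time.
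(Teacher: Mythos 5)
Your argument is essentially the paper's: bound the per-step escape probability by the ratio of outgoing degree to total degree, note that this ratio is $O(rq/p) = o(1/\log(n/r))$ under the hypothesis on $q$, and union bound over the $\ell = O(\log(n/r))$ steps. The paper does exactly this, except that it works directly with the expected degrees $p\,n/r$ and $q(n-n/r)$ and never raises the concentration issue, so your version is if anything more conscientious. The one place where your extra care does not quite go through is the uniform bound $d_{\text{out}}(v)\le 2(r-1)(n/r)q$ for all $v\in C$: under the lemma's hypothesis $q=o\bigl(p/(r\log(n/r))\bigr)$ with $p=\Theta(\log n/n)$, the expected out-degree $(r-1)(n/r)q$ is typically $o(1)$, and a multiplicative Chernoff bound combined with a union bound over $n$ vertices gives nothing in that regime (indeed, roughly an $nq(r-1)/r$ fraction of the vertices of $C$ will have $d_{\text{out}}(v)\ge 1$, violating the stated inequality, so the per-vertex escape probability is not uniformly $O(rq/p)$). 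A clean repair is to work with the aggregate quantity $|E(S,V\setminus C)|$ for a set $S\subseteq C$, which is a sum of $\Theta(|S|\,n)$ independent indicators and does concentrate, and to bound the total probability mass that leaves $C$ in one step by $|E(S,V\setminus C)|/(d\,|S|)$ --- this is precisely the route the paper takes in the subsequent lemma on mixing inside $C$ --- or else to accept the paper's own level of informality and argue purely in expectation over the graph.
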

\begin{proof}  
We show that in each step, the probability that the random walk goes outside of $C$ is $o(1/\log n)$. For any $u\in C$, the number of neighbors of $u$ in $C$ is $p|C| = pn/r$ and the number of neighbors in $\bar{C}= V\setminus C$ is $q|\bar{C}| = q(n-n/r)$ in expectation. Thus the probability that the random walk goes outside of the block $C$ is $\frac{q(n-n/r)}{p(n/r) + q(n-n/r)} = \frac{q(r-1)}{p + q(r-1)}$. This is $o(1/\log (n/r))$ when $q = o(\frac{p}{r\log (n/r)})$. Thus in $\ell = O(\log (n/r))$ steps, the probability that  walk goes outside of the block $C$ is $o(1)$. That is the random walk stays inside $C$ with probability at least $1-o(1)$. 
\end{proof}

Now we show that the random walk probability will mix over $C$ in $O(\log (n/r))$ steps. 

\begin{lemma}\label{lem:inside-gnp-mixing}
	Given a $G_{npq}$ graph and a node $s \in C$, a random walk starting from $s$ will mix over the nodes in $C$ after $\tau = O(\log (n/r))$ steps with high probability.  
\end{lemma}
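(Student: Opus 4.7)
The strategy is to reduce the claim to the mixing behaviour already established for $G_{np}$ graphs. Two ingredients are in hand: (i) by Lemma~\ref{lem:rw-stays-inside}, under the hypothesis $q = o(p/(r\log(n/r)))$, the walk started at $s$ stays inside $C$ for $\tau = O(\log(n/r))$ steps with probability $1 - o(1)$; and (ii) the subgraph $G[C]$ induced on $C$ is itself a $G_{n/r,p}$ random graph, and $p = \Omega(\log(n/r)/(n/r))$ puts it in the regime where Lemmas~\ref{lem:mixing-half-ball} and \ref{lem:gnp-community} apply. So if I can show that the walk on $G_{npq}$, viewed from inside $C$, behaves essentially like the walk on $G[C]$, then the $O(\log(n/r))$ mixing time of $G_{n/r,p}$ transfers directly.

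\textbf{Key steps.} First, I would couple the walk $\{X_t\}$ on $G_{npq}$ started at $s$ with a walk $\{Y_t\}$ on $G[C]$ started at $s$, using a common source of randomness at each step so that the two walks stay synchronised unless $X_t$ takes an inter-community edge. By Lemma~\ref{lem:rw-stays-inside} the two walks coincide for all $t \le \tau$ with probability $1 - o(1)$, up to the small perturbation caused by $X_t$ dividing by $d(u) = d_C(u) + d_{\bar C}(u)$ while $Y_t$ divides by $d_C(u)$. By standard concentration for $G_{npq}$, $d_C(u) = (pn/r)(1 \pm o(1))$ and $d_{\bar C}(u) = q(n-n/r)(1 \pm o(1))$ uniformly for $u \in C$, so the two transition kernels agree entrywise up to a factor $1 + o(1/\log(n/r))$, which accumulates to $o(1)$ over $\tau$ steps. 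Second, I apply the analysis of Lemma~\ref{lem:mixing-half-ball} (with degree parameter $d = pn/r$ and the second-eigenvalue bound $\lambda_2 \le 1/\sqrt{pn/r} + o(1)$) to conclude that the walk $\{Y_t\}$ on $G[C]$ satisfies $\|q_\tau - \pi_{G[C]}\|_1 < \epsilon/3$ for some $\tau = O(\log(n/r))$, where $\pi_{G[C]}(v) = d_C(v)/\mu_{G[C]}$. Third, I compare $\pi_{G[C]}$ with the restricted distribution $\pi_C$ from Definition~\ref{def:loc-mix-time}: since each $v \in C$ has $d(v) = d_C(v)(1 + o(1))$ and $\mu(C) = \mu_{G[C]}(1 + o(1))$, the two distributions differ in $L_1$ by $o(1)$. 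A triangle-inequality combination of the coupling error, the mixing bound on $G[C]$, and the stationary-distribution discrepancy then yields $\|\pd_\tau\!\restriction_C - \pmb{\pi}_C\|_1 < \epsilon$ with high probability.

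\textbf{Main obstacle.} The delicate point is managing three simultaneous sources of error — inter-community leakage, the normalization mismatch in the transition probabilities ($1/d(u)$ versus $1/d_C(u)$), and the mismatch between $\pi_C$ and $\pi_{G[C]}$ — and verifying that each is driven to $o(1)$ over $O(\log(n/r))$ steps by the assumption $q = o(p/(r\log(n/r)))$. In particular, the coupling error compounds multiplicatively over the $\tau$ rounds, so one must check that the per-step perturbation is genuinely $o(1/\log(n/r))$ rather than just $o(1)$; this is exactly the strength of the hypothesis on $q$ from Lemma~\ref{lem:rw-stays-inside}. A cleaner alternative would be a direct spectral argument showing that the principal block of the transition matrix of $G_{npq}$ corresponding to $C$ has spectral gap $1 - 1/\sqrt{pn/r} - o(1)$ by Weyl-type perturbation of the $G[C]$ spectrum by the rank-$O(1)$ inter-community coupling, but the coupling route above is more in keeping with the elementary combinatorial style used in Lemmas~\ref{lem:mixing-half-ball}--\ref{lem:rw-stays-inside}.
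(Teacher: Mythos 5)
Your proposal is correct and follows essentially the same route as the paper's proof: bound the probability mass leaking out of $C$ per step by $o(1/\log(n/r))$ using the degree ratio $\frac{q(n-n/r)}{p(n/r)+q(n-n/r)}$, conclude that $1-o(1)$ of the mass stays in $C$ over $\tau=O(\log(n/r))$ steps, and invoke the $G_{np}$ mixing analysis of Lemmas~\ref{lem:mixing-half-ball} and~\ref{lem:gnp-community} on the induced $G_{n/r,p}$ block. Your version is in fact more careful than the paper's --- the explicit coupling with the walk on $G[C]$, and the separate accounting of the $1/d(u)$ versus $1/d_C(u)$ normalization and of the $\pmb{\pi}_C$ versus $\pi_{G[C]}$ discrepancy, are all elided in the published argument.
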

\begin{proof}   
We show that after $O(\log (n/r))$ steps of the walk, the amount of probability goes out of $C$ is  very little and that the remaining probability will mix inside $C$. The expected number of outgoing edges from any subset $S$ of the block $C$ is $|E(S, V\setminus C)| = q|\bar{C}||S| = q(n-n/r)(|S|)$. In each step the amount of probability goes out of $C$ is $\frac{|E(S, V\setminus C)|}{d|S|}$, as $d= p(n/r) + q(n-n/r)$ is the degree of a node, each edge carries $1/d|S|$ fraction of the probability. We have $\frac{|E(S, V\setminus C)|}{d|S|} = \frac{q(n-n/r) |S|}{(p(n/r) + q(n-n/r))|S|} = o(1/\log (n/r))$ for $q = o(\frac{p}{r\log (n/r)})$. Thus in $\ell = O(\log (n/r))$ steps, the amount of probability goes out of $C$ is $o(1)$. Hence $1-o(1)$ fraction of the probability remains inside $C$ and it will mix over the nodes in $C$ after $O(\log (n/r))$ steps as shown in the above Lemma~\ref{lem:mixing-half-ball} and~\ref{lem:gnp-community}. 
\end{proof}

Thus it follows from the above lemma that the largest mixing set is $C$ after $\tau = O(\log (n/r))$ steps of the random walk. Further, it is shown in Lemma~4 of \cite{MP18} that the random walk keeps mixing in $C$ until $2\tau$ steps. In other words, $C$ remains the largest local mixing set for at least another $\tau$ steps. Thus the size of the largest local mixing set will not increase from $C$ in the further few steps of the walk after the mixing time $\tau$. Hence the algorithm outputs $C$ as a community with high probability. Since we sample the source node $s$ from the different blocks, each time our algorithm outputs a new community until all the blocks are identified as separate communities.
	
The $\delta$ value measures the rate of change of the size of the largest mixing set in each step. When the largest mixing set reaches a community $C$, the vertex expansion becomes $\frac{|E(C, V\setminus C)|}{d|C|}$ which is the conductance of the $G_{npq}$ graph. If the largest mixing set doesn't reach  the community, the size increases in higher rate than $\delta$. Hence we take $\delta$ to be $\Phi_G$ in our algorithm to stop and output the community. We assume that $\Phi_G$ is given as input, or it can be computed using a distributed algorithm, e.g., \cite{Kuhn2015DistributedSC}.

%% Running Time Discussion %%%%
\noindent \textbf{Complexity of the Algorithm in the CONGEST model.}
Let us first analyze the {\em distributed time complexity} of the Algorithm~\ref{alg:via-local-mixing} which computes a community corresponding to a given source node. We will focus on the CONGEST model first. The algorithm first computes a BFS tree of depth $O(\log n)$ from the source node. This takes $O(\log n)$ rounds. Note that the diameter of a $G_{np}$ graph is $O(\log n)$; hence the BFS tree covers all the nodes in the community containing the source node. The algorithm then iterates for the length of the walk, $\ell = 1, 2, 4, \ldots, O(\log n)$. In each iteration: 
\begin{itemize}
\item The algorithm probability distribution $\pd_{\ell}$. As we discussed before, it takes $O(1)$ rounds to compute $\pd_{\ell}$ from $\pd_{\ell-1}$. 
\item $s$ collects the sum of $|S|$ smallest $x_u$s through the BFS tree using binary search method. It takes $O((depth\_BFS\_tree) \cdot \log n) = O(\log^2 n)$ rounds. This is done for all the potential candidate set of size $(1+1/8e)^i|S|$, where $i=0, 1, 2, \ldots$. It may take $O(\log n)$ rounds in the worst case. Hence the total time taken is $O(\log^3 n)$ rounds.
\item  Checking if the sum of differences is less than $1/2e$ and also checking the community condition is done locally at $s$.
\end{itemize} 
Thus the total time required is $O(\log n) + O(\log n) \cdot (O(1) + O(\log^3 n))$, which is bounded by $O(\log^4 n)$. 

%It is shown in \cite{MP18} that local mixing time can be computed in $O(\tau^{loc})\log^3 n$ time in a regular graph, where $\tau^{loc}$ is the local mixing time of the graph corresponding to a source node $s$. The local mixing time algorithm in \cite{MP18} uses several binary search (such as on the length of the random walk, on the size of the mixing set, to find $|S|$ smallest $x_u$ values) to achieve the time complexity. Their algorithm essentially computes local mixing set by checking if there is a mixing set of certain size in each step of the random walk. Using the similar ideas and analysis, our algorithm computes the community $V_1$ in $O(\log^4 n)$ time in $G_{np}$, since $\tau^{loc}$ becomes the mixing time of $G_{np}$ which is $O(\log n)$. 

\noindent \textbf{Message Complexity of the Algorithm.}
Let us calculate the number of messages used by the algorithm during the execution in a $G_{npq}$ graph. The degree of a node is $p(n/r) + q(n-n/r)$ in expectation. Hence the number of edges in the $G_{npq}$ graph is $n^2p/r + nq(n-n/r)$.
In the worst case, the the algorithm runs over all the edges in the graph. Thus the message complexity of the algorithm for computing a single community is bounded by (time complexity) $\times$ (the number of edges involved during the execution), which would be $O(\frac{n^2}{r}(p+q(r-1)) \log ^4 n)$ in expectation. That is the message complexity of the Algorithm~\ref{alg:via-local-mixing} is $\tilde{O}(\frac{n^2}{r}(p+q(r-1)))$.

Therefore we have the following main result. 
\begin{theorem}\label{thm:main} 
Consider a stochastic block model $G_{npq}$ with $r$ blocks, where $p = \Omega(\frac{\log n}{n})$ and $q = o(\frac{p}{r\log (n/r)})$. Given a node $s$ in the $G_{npq}$ graph, there is a distributed algorithm (cf. Algorithm~\ref{alg:via-local-mixing}) that computes the block containing $s$ as a community with high probability in $O(\log^4 n)$ rounds and incurs $\tilde{O}(\frac{n^2}{r}(p+q(r-1)))$ messages in expectation.     
\end{theorem}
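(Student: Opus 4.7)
The plan is to assemble Theorem~\ref{thm:main} from the block-level random walk analysis already developed in Lemmas~\ref{lem:rw-stays-inside} and~\ref{lem:inside-gnp-mixing}, together with a round-by-round cost analysis of Algorithm~\ref{alg:via-local-mixing}. The statement has a correctness half (the returned set is exactly the block $C$ containing $s$ with high probability) and a complexity half (round and message bounds in the CONGEST model), and I would prove them in that order, after fixing $C$ as the block of $s$ and viewing the induced subgraph $G[C]$ as a $G_{\frac{n}{r}p}$ random graph.

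For correctness, I would argue as follows. Under the hypothesis $q = o(p/(r\log(n/r)))$, Lemma~\ref{lem:rw-stays-inside} ensures that the random walk from $s$ stays inside $C$ with probability $1-o(1)$ for the first $O(\log(n/r))$ steps. While $\ell$ is below the within-block mixing time, the analysis of Lemmas~\ref{lem:mixing-half-ball} and~\ref{lem:gnp-community} (applied to $G[C]$) gives that the largest local mixing set is the ball $B_{\lfloor \ell/2 \rfloor} \subseteq C$, whose size grows by a factor $\Theta(np/r) \gg 1+\delta$ per step, so the $(1+\delta)$-growth test does not terminate the outer loop. Once $\ell$ reaches the within-block mixing time, Lemma~\ref{lem:inside-gnp-mixing} shows the largest mixing set equals $C$, and the ``persistence'' property from Lemma~4 of \cite{MP18} gives that it stays equal to $C$ for another $\tau$ steps; thus the growth ratio drops to $1 < 1+\delta$ and the algorithm terminates and outputs $C^s = C$.

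For the complexity, I would account for each phase of the algorithm invoked at $s$. Building the BFS tree of depth $O(\log n)$ takes $O(\log n)$ rounds. The outer for-loop runs $O(\log n)$ times; in each iteration, updating $\pd_\ell$ from $\pd_{\ell-1}$ takes $O(1)$ rounds, and the inner sweep over candidate sizes $|S| \in \{R, (1+1/8e)R, \ldots\}$ has $O(\log n)$ iterations. Each inner iteration selects the $|S|$ smallest $x_u$ values via binary search through the BFS tree, which is $O(\log n)$ bisection rounds, each a broadcast/convergecast along a tree of depth $O(\log n)$, giving $O(\log^2 n)$ rounds per inner iteration. Multiplying yields $O(\log n) \cdot O(\log n) \cdot O(\log^2 n) = O(\log^4 n)$ rounds. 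For the message bound, the block $C$ induces $\tilde{O}(n^2 p/r^2)$ edges and each node in $C$ has $\tilde{O}(q(n-n/r))$ inter-block edges that may briefly carry walk probability, and each such edge is activated $\tilde{O}(1)$ times per phase, so the total active edge-round count is $\tilde{O}((n^2/r)(p+q(r-1)))$, matching the claimed message complexity.

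The main obstacle I anticipate is justifying that the within-block mixing analysis transfers cleanly despite inter-block leakage: the distribution $\pd_\ell {\restriction_C}$ is not identical to a pure $G_{\frac{n}{r}p}$ walk, since each step leaks $O(q(n-n/r)/d)$ probability across the cut. I would handle this by bounding the per-step leak as $o(1/\log(n/r))$ (from the parameter assumption) and summing over $O(\log(n/r))$ steps to get cumulative leak $o(1)$, so that $\pd_\ell {\restriction_C}$ is within $o(1)$ in $L_1$ of the ideal within-block distribution; this is enough to preserve both the mixing-set identification and the $(1+\delta)$ growth test. A secondary technical point is choosing $\delta = \Phi_G$: I would verify that under the stated regime, $\Phi_G$ is small enough for the algorithm to stop only at the correct step (i.e., $\Phi_G = o(np/r)$) while still being separated from $1$, so the gap between the pre-termination growth ratio $\Theta(np/r)$ and the post-termination ratio $1$ straddles $1+\Phi_G$ cleanly.
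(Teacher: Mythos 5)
Your proposal is correct and follows essentially the same route as the paper: it assembles the theorem from Lemmas~\ref{lem:mixing-half-ball}--\ref{lem:inside-gnp-mixing} plus the persistence argument from Lemma~4 of \cite{MP18} for correctness, and uses the identical $O(\log n)\cdot O(\log n)\cdot O(\log^2 n)$ round accounting; your ``leakage'' concern is exactly what Lemma~\ref{lem:inside-gnp-mixing} already resolves. The only cosmetic difference is in the message count, where the paper simply multiplies the round complexity by the total number of edges $\tfrac{n^2}{r}(p+q(r-1))$ rather than restricting to edges incident to $C$, but both yield the stated bound.
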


The CDRW algorithm can be used to detect all the $r$ communities in the PPM graphs one by one. In that case the running time would be $r$ times the time of detecting one community, which is $O(r \log^4 n)$. The message complexity in this case would be $O(n^2(p+q(r-1))\log^4 n)$ in expectation. Thus we have the following theorem.

\begin{theorem}\label{thm:r-community} 
Given a stochastic block model $G_{npq}$ with $r$ blocks, where $p = \Omega(\frac{\log n}{n})$ and $q = o(\frac{p}{r\log (n/r)})$, there is a distributed algorithm (cf. Algorithm~\ref{alg:via-local-mixing}) that correctly computes each block as a community with high probability and outputs all the $r$ communities in $O(r\log^4 n)$ rounds and incurs expected $\tilde{O}(n^2(p+q(r-1)))$ messages.  
\end{theorem}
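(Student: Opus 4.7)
The plan is to derive Theorem~\ref{thm:r-community} as an iterative application of Theorem~\ref{thm:main}. Algorithm~\ref{alg:via-local-mixing} maintains a \emph{pool} of unassigned vertices, and its outer while-loop repeatedly picks a random source from the pool, invokes the single-community subroutine that Theorem~\ref{thm:main} already analyzes, and removes the returned community from the pool. I will argue by induction on iterations that this procedure (i)~terminates after exactly $r$ rounds of the while-loop and (ii)~returns the $r$ planted blocks; then (iii)~I will sum the per-iteration time and message bounds from Theorem~\ref{thm:main}.

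For the inductive correctness step, suppose that after $i$ iterations the detected sets coincide with $i$ distinct planted blocks. Any source $s_{i+1}$ drawn from the pool then lies in some not-yet-detected block $B_{i+1}$. Crucially, the subroutine still runs on the \emph{full} graph $G_{npq}$ (the pool is used only to pick fresh sources, never to restrict random-walk dynamics), so Lemmas~\ref{lem:rw-stays-inside} and~\ref{lem:inside-gnp-mixing} apply verbatim to the walk started at $s_{i+1}$ and the conclusion of Theorem~\ref{thm:main} gives $C^{s_{i+1}}=B_{i+1}$ w.h.p. After $r$ such iterations the pool is empty and all blocks have been returned. To turn the per-iteration guarantee into an overall w.h.p.\ bound, I would strengthen the constant in the high-probability statements of Lemmas~\ref{lem:mixing-half-ball}–\ref{lem:inside-gnp-mixing} (these are standard Chernoff/spectral tail estimates on $G_{n/r,p}$) so that each invocation succeeds with probability at least $1-1/n^2$; a union bound over $r\le n$ invocations then yields total failure probability at most $1/n$.

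For the complexity bounds the work is essentially bookkeeping. Theorem~\ref{thm:main} gives $O(\log^4 n)$ rounds per invocation, so the $r$ invocations contribute $O(r\log^4 n)$ rounds in total, and the messages add up linearly to $r\cdot \tilde{O}(\tfrac{n^2}{r}(p+q(r-1)))=\tilde{O}(n^2(p+q(r-1)))$ in expectation. The main obstacle, and the only place I expect substantive work, is justifying that the single-community analysis really does carry over verbatim to the $(i{+}1)$-th invocation: I need to confirm that the identities of previously-detected nodes play no role in either the random-walk probabilities $p_\ell(\cdot)$ or the BFS/binary-search machinery used to certify the largest mixing set, so that the same localization argument (the walk from $s_{i+1}$ stays in $B_{i+1}$ with probability $1-o(1)$ under $q=o(p/(r\log(n/r)))$, and mixes inside $B_{i+1}$ after $O(\log(n/r))$ steps) suffices to identify $B_{i+1}$ independently of what was detected before. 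Once this is verified, the two bounds and the w.h.p.\ guarantee follow immediately.
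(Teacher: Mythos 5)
Your proposal is correct and follows essentially the same route as the paper, which obtains Theorem~\ref{thm:r-community} by running the single-community procedure of Theorem~\ref{thm:main} once per block and multiplying the time and message bounds by $r$. If anything, you are more careful than the paper: the induction on iterations, the observation that the pool only restricts source selection (not the walk dynamics), and the explicit union bound over the $r$ invocations are all left implicit in the paper's one-paragraph justification.
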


\vspace{-0.1in}
\subsection{Complexity in the $k$-machine model.}
\vspace{-0.05in}
\label{subsec:kmachine}

As mentioned earlier, in the $k$-machine model, the input (SBM) graph is partitioned across the $k$ machines according
to the random vertex partition (RVP) model (cf. Section \ref{sec:model}).
The algorithm can be implemented in the $k$-machine model by simulating the corresponding CONGEST model algorithm.
Note that since each vertex and its incident edges are assigned to a machine (i.e., its ``home" machine --- cf. Section \ref{sec:model}),
the machine simply simulates the code executed by the vertex in the CONGEST model. If a vertex $u$ sends a message to its neighbor $u$ in the CONGEST model, then the home machine of $u$ sends the same message to the home machine of $v$
(addressing it to $v$). If $u$ and $v$ have the same (home) machine, then no communication is incurred, otherwise there will
be communication along the link that connects these two home machines.  This type of simulation is
detailed in \cite{KlauckNPR15}.  Hence one can use the Conversion Theorem (part a) of \cite{KlauckNPR15}  to compute the 
round complexity of the CDRW implementation in the $k$-machine model which depends on the message complexity and time complexity of CDRW in the CONGEST model. If $M$ and $T$ are the message and time complexities (respectively) in the 
CONGEST model, then in the $k$-machine model then by the Conversion Theorem, the above simulation will give
a round complexity of $\tilde{O}(M/k^2 + (\Delta T)/k)$, where $\Delta$ is the maximum degree of the graph.\footnote{$\tilde{O}$ notation hides a $\mbox{polylog } n$ multiplicative and additive factor.} For the SBM model,
$\Delta = O(np/r + (n-n/r)q)$. Hence plugging in the message complexity and time complexity from the CONGEST model analysis, we have that the round complexity in the $k$-machine model is $\tilde{O}( (\frac{n^2}{k^2} + \frac{n}{kr}) (p+q(r-1)) )$.

\section{Experimental Results}
In this section we experimentally analyze the performance of our algorithm in the PPM model under various parameters. In particular, we show how accurately our algorithm can identify the communities  in the PPM model.
As an important special case, we also analyze the case when $r=1$, i.e., there is only one community --- in other words, the whole graph is a $G(n,p)$ random graph. In this case, we expect the algorithm to output the whole graph as one community. 

\par
Since in the PPM model, we know the ground-truth communities, we use F-score metric \cite{info-book} to measure the accuracy of the detected communities. Let $C^D$ be the set of detected communities by CDRW algorithm and $C^G=\cup C_i$ be the ground-truth communities (each $C_i$ is a ground-truth community). Let $C^s$ be the detected community by CDRW using seed node $s$ and $C^g$ be the ground-truth community that seed node $s$ belongs to. Then the {\em precision} is the percentage of truly detected members in detected community defined as $precision(C^s)=\frac{|C^s \cap C^g|}{|C^s|}$ and {\em recall}   is the percentage of truly detected members from the ground truth community defined as $recall(C^s) = \frac{|C^s \cap C^g|}{|C^g|}$. Both {\em precision} and {\em recall} return a high value when a method detects communities well. For example, if all the detected members belong to the ground-truth community of the seed node, then its {\em precision} is equal to $1.0$; and if all the ground-truth community members of the seed node are included in the detected community, then its {\em recall} value is equal to $1.0$.
We utilize F-score as our accuracy measurement metric which reflects both precision and recall of a result. F-score
of a detected community $C^s$ is defined as:\small
%\begin{equation}
%\label{eq:fscorei}
$F\text{-}score(C^s) = \frac{2 \times precision(C^s) \times recall(C^s)}{precision(C^s)+recall(C^s)}$ \normalsize.
%\end{equation}
Then the total {\em F-score} is equal to the average F-score of all detected communities: \small $F\text{-}score = \frac{1}{|C^D|} \sum_{C^j \in C^D} {F\text{-}score(C^j)}$ \normalsize.
%\begin{equation}
%\label{eq:fscore}
%F\text{-}score = \frac{1}{|C^D|} \sum_{C^j \in C^D} {F\text{-}score(C^j)}
%\end{equation}
Again a higher F-score value means a better detection of communities.
 \begin{figure}[t]
 	\centering
	\includegraphics[width=0.47\textwidth]{./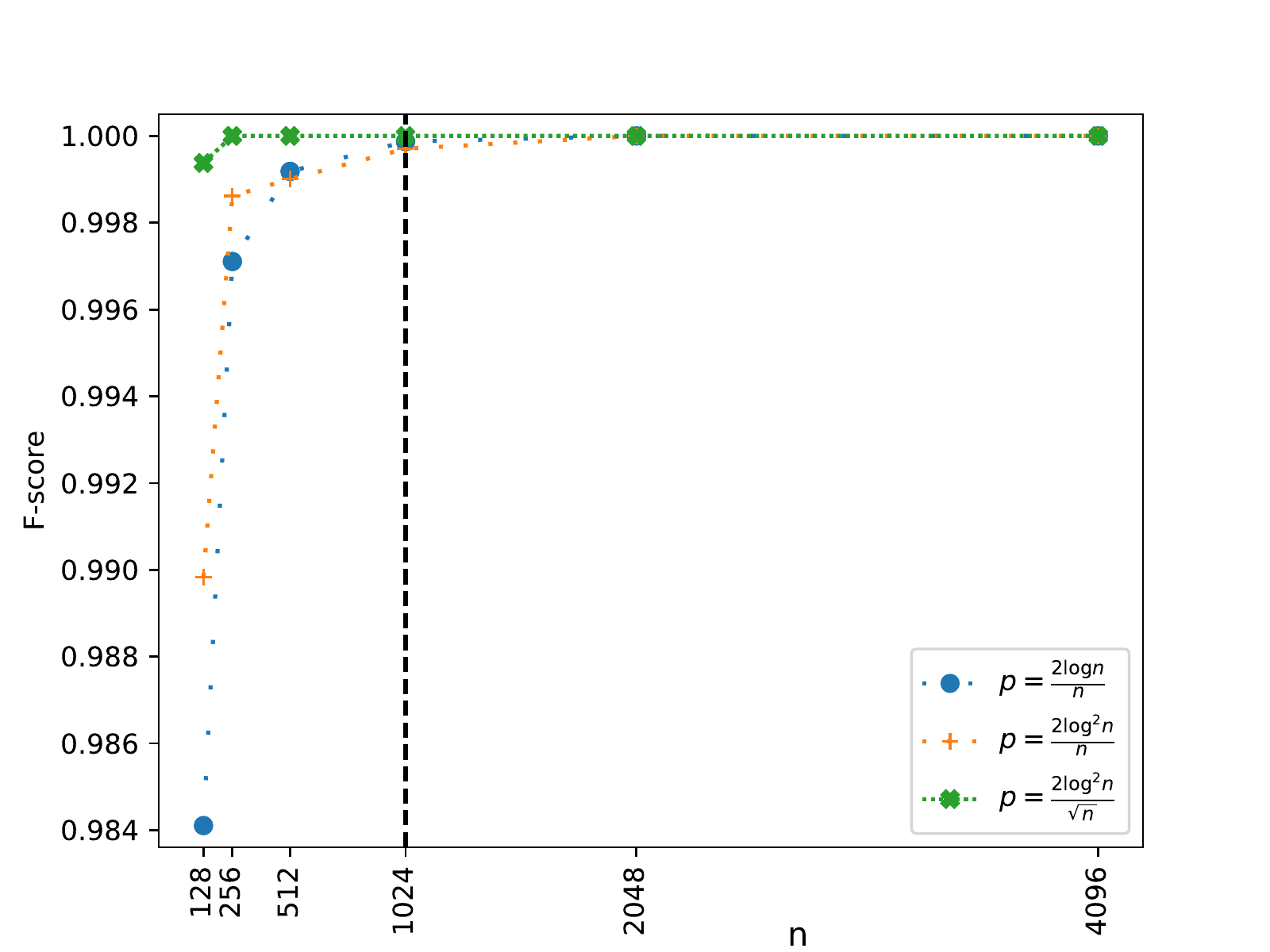}
	%\caption{\footnotesize Accuracy of CDRW algorithm on $G_{np}$ random graphs. It shows the performance of CDRW on  sparse random graphs. Even when the graph is sparse, as close to the connectivity threshold as possible, its accuracy is still high. The vertical line shows that when the size is big enough ($n\geq 2^{10}$), the accuracy becomes almost $1.0$.}
	\caption{\footnotesize Community detection accuracy of CDRW algorithm on $G_{np}$ random graphs. It shows even when the graph is sparse, when $p$ is small and as close to the connectivity threshold as possible, its accuracy is still high. The vertical line shows that when the size is big enough ($n\geq 2^{10}$), the accuracy becomes almost $1.0$.}
	\label{fig:gnp}
\end{figure}

The first challenge for any community detection (CD) algorithm is detecting a random graph as a single community. This challenge becomes harder when the graph becomes sparse and it gets closer to the connectivity threshold of a random graph (i.e. $p=\frac{c\log n}{n}$, s.t. $c>1$) \cite{bollobas1998random}. In the first experiment we show that our CDRW algorithm detects almost the whole graph as a single community resulting in a high F-score accuracy value, see Figure~\ref{fig:gnp}. Figure \ref{fig:gnp} shows that when we increase the size of graph $n$, the accuracy of our algorithm increases as well. For example, for $n=2^{10}$ the accuracy metric becomes almost $1.0$, meaning that almost all the nodes of the graph is detected as a single community. It also shows that when $p$ increases (graph gets denser), the accuracy also increases. So in the remaining experiments on PPM graphs, we choose two lowest values of $p=\frac{c\log n}{n}$ and $p=\frac{c\log^2{n}}{n}$ for generating its random parts in order to give more challenging input graphs to the CDRW algorithm.

 \begin{figure}[t]
 	\centering
 	\includegraphics[width=0.5\textwidth]{./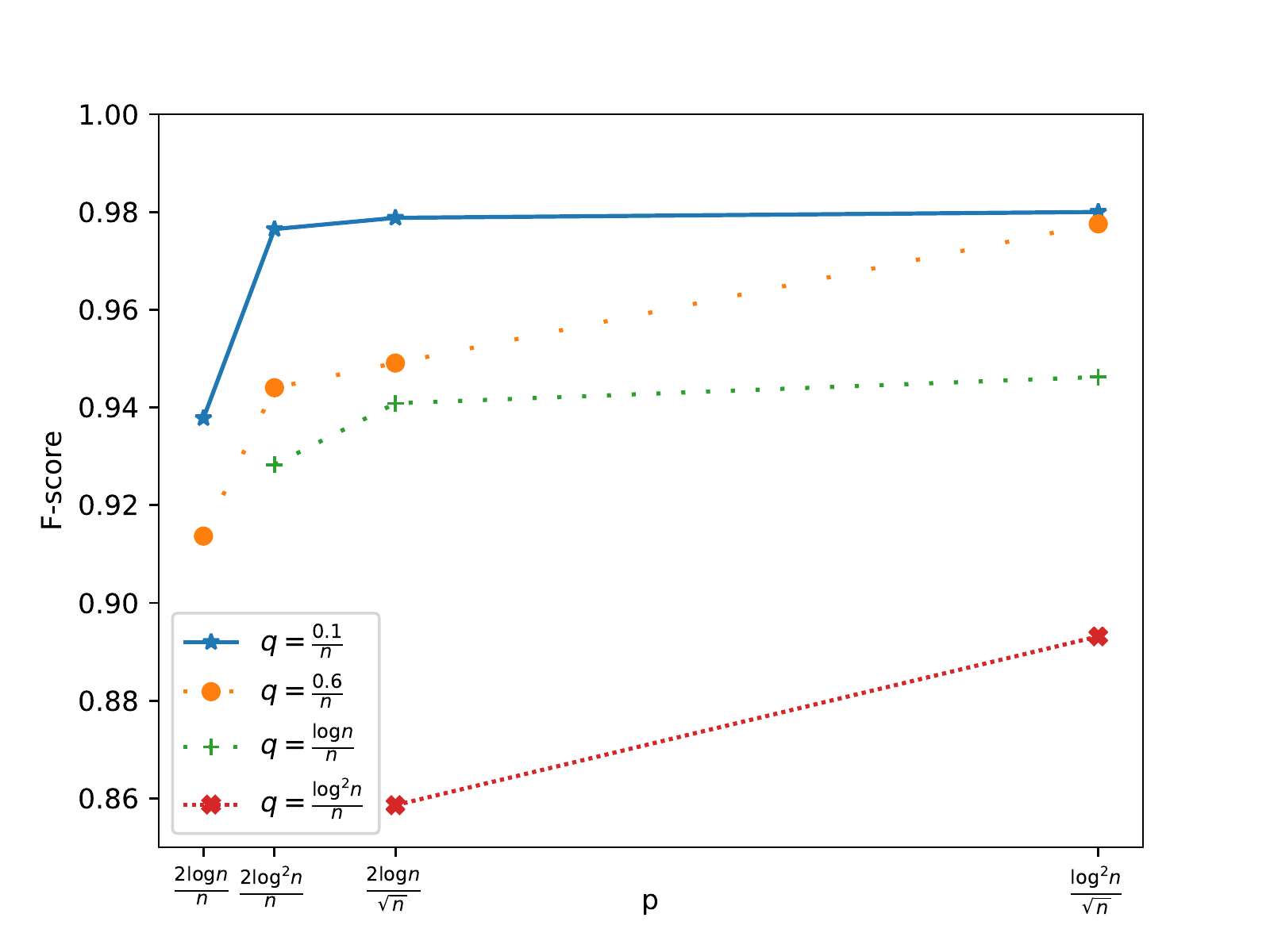}
 	\caption{\footnotesize Performance of CDRW algorithm on PPM graphs when there are two parts/communities ($r=2$). We fixed the size of the graph to $n=2^{11}$, each planted partition is of size $2^{10}$. It shows that CDRW works well for small values of $p=\frac{2\log n}{n}$ and $p=\frac{2\log^2{n}}{n}$ when $q$ is small enough.}
 	\label{fig:gnpqk2}
\end{figure}

%\begin{figure}[t]
%	\centering
%	\includegraphics[width=0.5\textwidth, height=0.28\textwidth]{./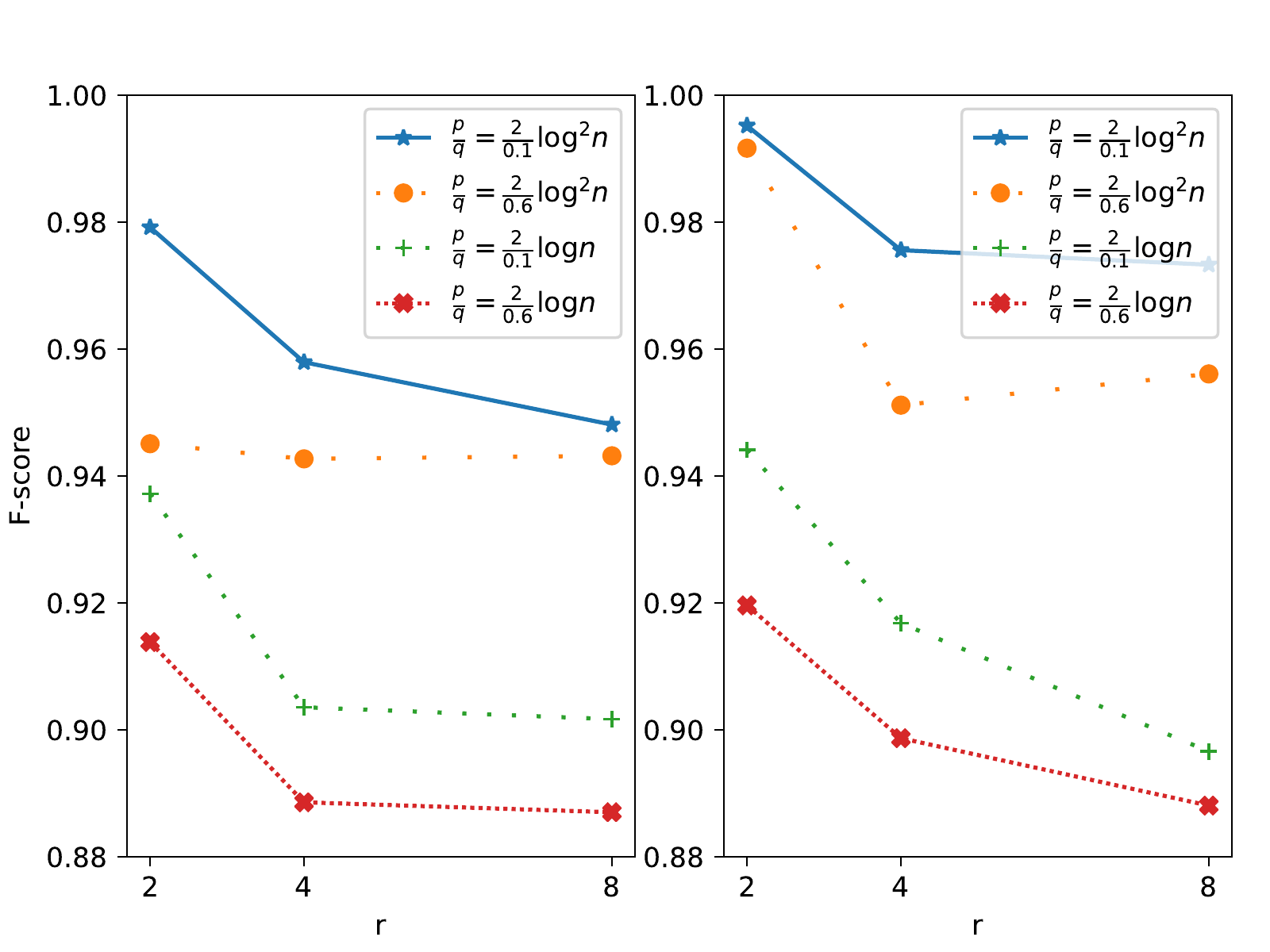}
%	\caption{\footnotesize Varying the number of ground-truth communities to see its effect on the accuracy of our CDRW algorithm. It shows that when we increase the number of communities, the accuracy decreases slightly. This is expected because the number of inter-community edges increases. Comparing Sub-figure \ref{fig:nk1} and \ref{fig:nk2}, we see that if we fix the number of communities, then the accuracy gets higher when the  size of the communities becomes larger.}
%	\label{fig:onefig}
%\end{figure}

After showing that CDRW works well on $G_{np}$ random graphs, now we consider PPM $G_{npq}$ graphs. At first we fix the number of communities to two ($r=2$) so that we can consider the effect of various values of $p$ and $q$. This will show us the threshold for the ratio of $\frac{p}{q}$ where CDRW works well. As we showed in Figure \ref{fig:gnp}, when the size of each random graph is big enough ($n\geq 2^{10}$), CDRW detects a single $G_{np}$ community well. Therefore we set the size of $G_{npq}$ to $n=2^{11}$ which makes each ground-truth community big enough ($\frac{n}{r}=2^{10}$). 
When considering PPM graphs with $p$ and $q$, as the connectivity probability for intra- and inter-community edges, CD algorithms face hardship in detecting communities when  $p$ is small and $q$ is relatively high. 
%In other words, community detection is a hard task when the communities are sparse and the inter-community edges are dense. 
But the $\frac{p}{q}$ ratio can not be arbitrarily small because it causes the two communities blend into each other and the graph looses its community structure.
Figure \ref{fig:gnpqk2} shows accuracy of CDRW for different values of $p$ and $q$. We highlight that it shows even for sparse parted $G_{npq}$ graphs: for $p=\frac{2\log n}{n}$, CDRW detects the two communities with a high F-score value (more than $0.90$) for $q=\frac{0.1}{n}$ and $\frac{0.6}{n}$. In other words, our CDRW algorithm works well even on sparse parted PPM graphs when the $\frac{p}{q}$ ratio is as small as ($\Omega(\log n)$). Notice that when $p=\frac{2\log n}{n}$, the two ground truth communities of the PPM graph are as sparse as possible, i.e close to its connectivity threshold. In the latter example, for instance, when $q=\frac{0.6}{n}$, a partition has in expectation $e_{in} = \binom{\frac{n}{r}}{2} p = 10230$ intra and $e_{out}= \frac{n}{r}(n-\frac{n}{r})q = 614$ inter community edges. It means the ratio of inter to intra community edges ($\frac{e_{out}}{e_{in}}$) is high and equal to $6\%$.

\begin{figure*}
	\begin{subfigure}[]{\columnwidth}
		\includegraphics[width=\textwidth, left]{./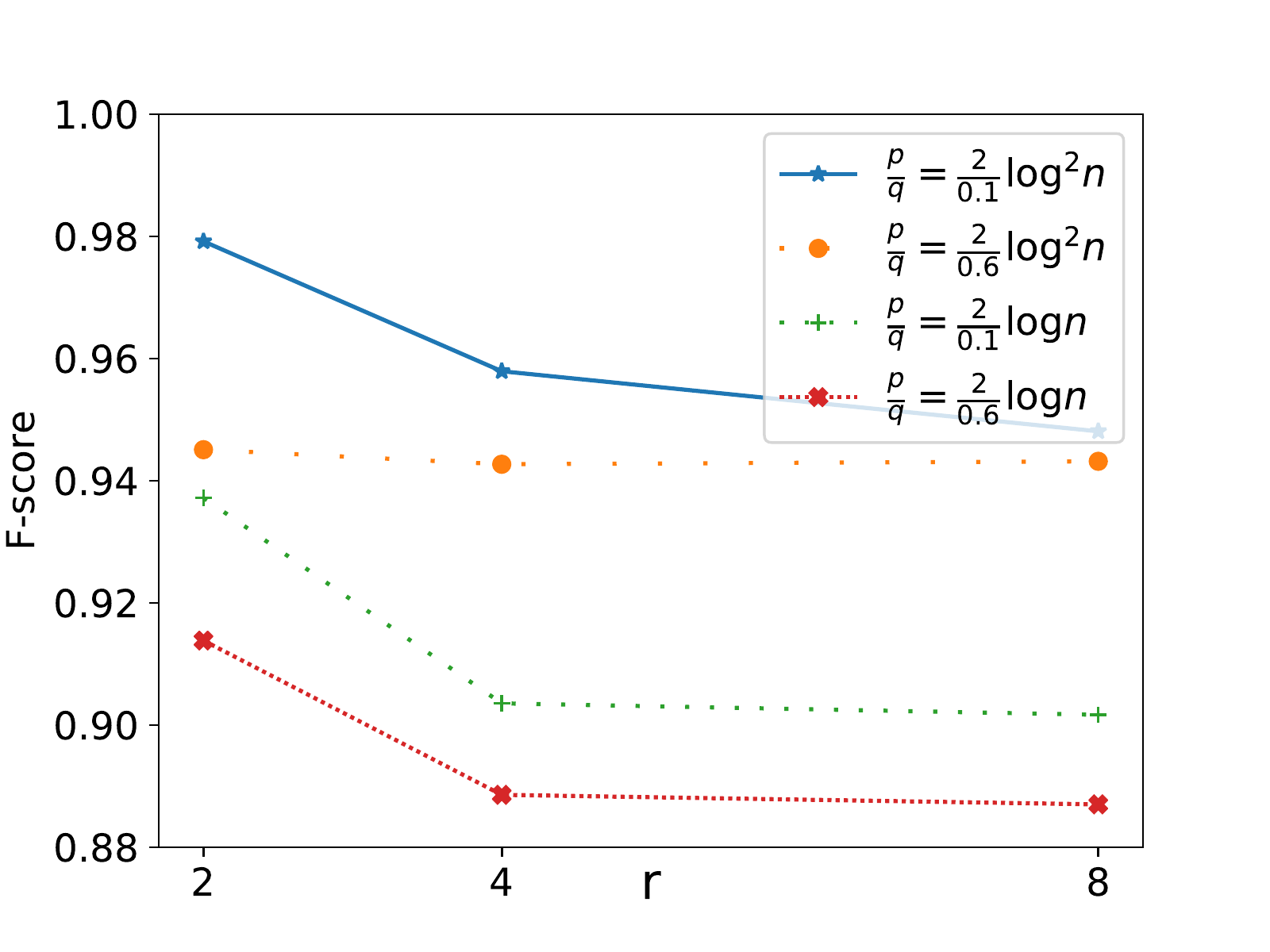}
		\caption{$n = r \times 2^{10}$.}
		\label{fig:nk1}
	\end{subfigure}%
	\begin{subfigure}[]{\columnwidth}
		\includegraphics[width=\textwidth, left]{./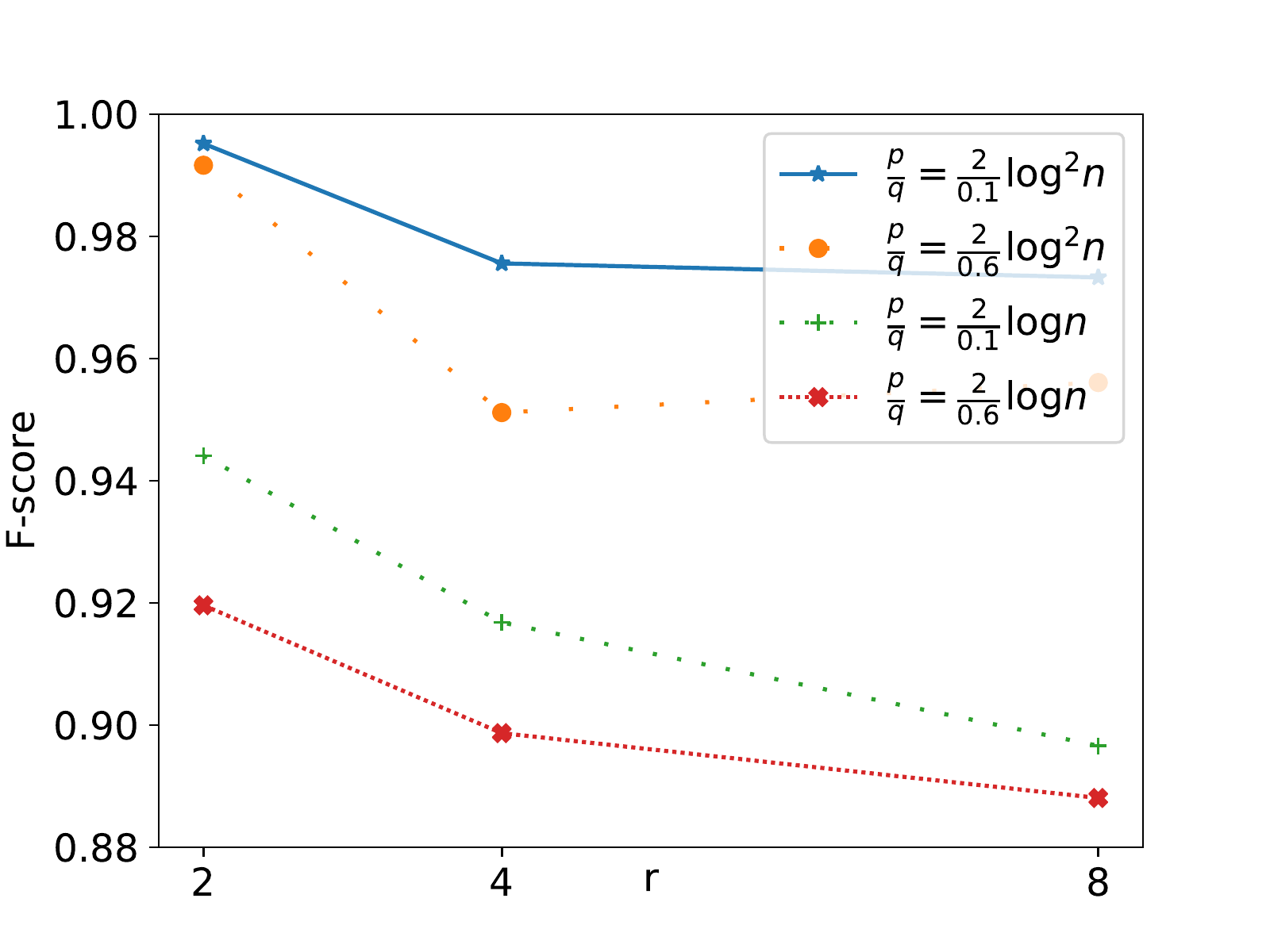}
		\caption{$n=8 \times 2^{10}$.}
		\label{fig:nk2}
	\end{subfigure}
	\caption{\footnotesize Varying the number of ground-truth communities to see its effect on the accuracy of our CDRW algorithm. It shows that when we increase the number of communities, the accuracy decreases slightly. This is expected because the number of inter-community edges increases. Comparing Sub-figure \ref{fig:nk1} and \ref{fig:nk2}, we see that if we fix the number of communities, then the accuracy gets higher when the  size of the communities becomes larger.}
	\label{fig:diffk}
\end{figure*}

We now consider the effect of increasing the number of ground-truth communities ($r$) in order to see its effect on the accuracy of our CDRW algorithm, see Figure~\ref{fig:diffk}. We do it in two ways. First, we fix the size of each community to $2^{10}$ and vary the number of communities. The size of graph is  $n=r\times2^{10}$. Figure \ref{fig:nk1} shows that our CDRW algorithm works well when we reasonably increase the number of communities. Second, we fix the size of graph to a number so that the size of each community is $2^{10}$ when the number of communities is the biggest ($r=8$), see Figure \ref{fig:nk2}. Then, when the number of communities becomes lower, the size of communities gets bigger. By comparing Sub-figures \ref{fig:nk1} and \ref{fig:nk2}, we see that when the number of  communities are the same, the accuracy is higher when the size of each community is bigger.

\vspace{-0.05in}
\section{Conclusion}
\vspace{-0.05in}
We proposed a distributed algorithm, CDRW, for community detection that works well for the PPM model ($G_{npq}$ random graph), a standard random graph model used extensively in clustering and community detection studies.  Our CDRW algorithm is relatively simple and localized; it uses random walk and mixing time property of graphs  to detect communities. We provide a rigorous 
theoretical analysis of our algorithm on the $G_{npq}$ random graph and characterize its performance vis-a-vis the parameters of the model. In particular, our main result is that it correctly identifies the communities provided $q = o(p/(r\log(n/r)))$, where $r$ is the number of communities. Our algorithm takes $O(r \times \mbox{polylog } n)$ rounds and hence
is quite fast when $r$ is relatively small. We point out that our algorithm can also be extended to find
communities even faster (by finding communities in parallel), assuming we know an (estimate) of $r$. 
For future work, it will be interesting to study the performance of this algorithm on other graph models and can be a starting point to design and analyze  community detection algorithms that perform well in the more challenging case of real-world graphs.
%both sparse and dense synthesis graphs while degree distribution of nodes are close to uniform. 

\iffalse
Our second method called CDR algorithm works well on both synthesis and real world graphs. Although the former does not work well on real world graphs, it beats the latter one on sparse parted $PPM$ graphs. This shows that it has potential for further investigation as a future work.

\par
Our experiments included different size of graphs, community sizes, number of communities and density of inter and intra community densities. Our proposed algorithms over performed compared base line methods in the quality of detected communities. We also showed our CDR algorithm runs more efficient in terms of time, message and space complexity compared to baseline methods. Providing analytical proof of correctness for the CDR algorithm would be also an interesting further investigation.
\fi

\balance
\bibliographystyle {abbrv}
\bibliography{./parts/main,biblio,pram}
\end{document}